\documentclass[conference,letterpaper]{IEEEtran}

\IEEEoverridecommandlockouts

\usepackage[utf8]{inputenc} 
\usepackage[T1]{fontenc}
\usepackage{url}
\usepackage{ifthen}
\usepackage{cite}
 \usepackage[cmex10]{amsmath} 
 \usepackage{mathrsfs}
\usepackage{amssymb,bm, bbm}

 \usepackage{tikz}
 \usepackage{graphicx}
 \usepackage{pgfplots}
\usepackage{subcaption}

\usepackage[
  letterpaper,
  left=0.64in,
  right=0.64in,
  top=0.76in,
  bottom=1.1in
]{geometry}
\usepackage{epsfig}
\usepackage{times}
\usepackage{float}
\usepackage{afterpage}
\usepackage{amsmath}
\usepackage{amstext}
\usepackage{amssymb,bm, bbm}
\usepackage{latexsym}
\usepackage{color}
\usepackage{graphicx}
\usepackage{amsmath}
\usepackage{amsthm}
\usepackage{graphicx}
\usepackage{multicol}
\usepackage{lipsum}
\usepackage{dblfloatfix}
\usepackage{mathrsfs}
\usepackage{cite}
\usepackage{tikz}
\usepackage{pgfplots}
\pgfplotsset{compat=newest}
 \usepackage{enumitem}
\usepackage{makecell}
\usepackage{float}
\restylefloat{table}

\allowdisplaybreaks
\usepackage{url}

\newcommand{\D}{\mathsf{D}}
\newcommand{\supp}{\mathrm{supp}}
\newcommand{\KL}{\mathsf{D}}

\newcommand{\rme}{\mathrm{e}}

\newcommand{\bbP}{\mathbb{P}}

\newcommand{\bbR}{\mathbb{R}}

\newcommand{\cA}{\mathcal{A}}

\newcommand{\kl}[2]{{\D}\left(\left.#1 \, \right\| #2 \right)}

\theoremstyle{mystyle}
\newtheorem{theorem}{Theorem}
\theoremstyle{mystyle}
\newtheorem{lemma}{Lemma}
\theoremstyle{mystyle}
\newtheorem{prop}{Proposition}
\theoremstyle{mystyle}
\theoremstyle{mystyle}
\theoremstyle{remark}
\theoremstyle{mystyle}
\theoremstyle{mystyle}
\theoremstyle{mystyle}
\theoremstyle{discussion}
\theoremstyle{mystyle}
\theoremstyle{mystyle}

 \usepackage{enumitem}

\begin{document}

\title{An Improved Lower Bound on  Support Size of Capacity-Achieving Inputs for the Binomial Channel: Extended version}

\author{
\IEEEauthorblockN{Mohammadamin Baniasadi$^{\dagger}$, Luca Barletta$^{*}$,  and Alex Dytso$^{**}$}
$^{\dagger}$ University of California, Davis, USA. Email: mbaniasadi@ucdavis.edu \\
$^{*}$ Politecnico di Milano, Milan, Italy. Email: luca.barletta@polimi.it \\
$^{**}$ Qualcomm Flarion Technologies, Bridgewater, USA.
Email: odytso2@gmail.com 
}

\maketitle

\begin{abstract}
We study the binomial channel and the structure of its capacity-achieving input and output distributions. It is known that the capacity-achieving input distribution is discrete and supported on finitely many points. The best previously known bounds show that the support size of the capacity-achieving distribution is lower-bounded by a term of order $\sqrt n$ and upper-bounded by a term of order $n/2$, where $n$ is the number of trials.

In this work, we derive a new lower bound on the support size of order $\sqrt{n\log\log n}$, up to explicit constants. The proof consists of three main steps. First, we derive new upper and lower bounds on the capacity with a gap that vanishes as $n\to\infty$, which yields
$C(n)=\frac12\log\frac{n\pi}{2e}+o(1)$.
Second, we show that the Beta-binomial output distribution induced by the reference input $X_r\sim\mathrm{Beta}(1/2,1/2)$ is asymptotically optimal: it approaches the capacity-achieving output distribution in relative entropy and, after a comparison step, in $\chi^2$ divergence. Third, we prove a quantitative $\chi^2$ approximation lower bound showing that this Beta-binomial output cannot be approximated too well by the output induced by a $K$-point input. Combining these ingredients forces the capacity-achieving input distribution to have at least order $\sqrt{n\log\log n}$ mass points.
\end{abstract}

\section{Introduction}
We consider the binomial channel with input-output law
\begin{equation}
P_{Y|X}(y|x)=\binom{n}{y}x^y(1-x)^{n-y}, \quad x \in [0,1],\, y \in \{0,\ldots,n\}.
\end{equation}
The channel capacity is
\begin{equation}\label{eq:capacity_opt_problem}
C(n)=\max_{P_X:\,X\in[0,1]} I(X;Y) = I(X^\ast; Y^\ast),
\end{equation}
where $P_{X^\ast}$ and $P_{Y^\ast}$ are the capacity-achieving input and output distributions, respectively.

The binomial channel naturally arises in molecular communications, and the interested reader is referred to
\cite{farsad2020capacities,einolghozati2013design,farsad2017capacity,jamali2019channel}
and references therein. The channel is also useful in the study of the deletion channel
\cite{levenshtein1966binary,cheraghchi2019capacity}. The capacity of the binomial channel was first considered in
\cite{komninakis2001capacity}, where the minimax redundancy theorem of \cite{xie1997minimax}
was used to show that the capacity scales asymptotically as $\frac12 \log n$. The exact capacity for the case
$n=1$ was computed in \cite{farsad2020capacities}, where the binary distribution supported on $\{0,1\}$ was shown to be capacity-achieving. However, sharp non-asymptotic bounds on the capacity remain limited.

The structure of the capacity-achieving input distribution has also been studied. For example, the authors of
\cite{farsad2020capacities} designed an algorithm for computing the capacity and a capacity-achieving input distribution using a dual representation of the maximization problem. It is also known, by the Witsenhausen technique
\cite{WitsenhausenBOund}, that there exists a capacity-achieving distribution with at most $n+1$ mass points.

More recently, \cite{Zieder2024} studied structural properties of capacity-achieving distributions for the binomial channel. In particular, it was shown that all capacity-achieving distributions are discrete. This discreteness was then used to establish strong concavity of the mutual information, which implies uniqueness of $P_{X^\star}$. It was also shown that the capacity-achieving distribution is symmetric around $\frac12$. The same work improved the Witsenhausen upper bound of $n+1$ mass points to an upper bound of order $\frac n2$, and proved a lower bound of order $\sqrt n$. The authors also derived capacity bounds. However, the gap between these bounds does not vanish with $n$, which limits their use for obtaining sharper cardinality lower bounds.

In this work, we improve the lower bound on the number of mass points of $P_{X^\ast}$ from order $\sqrt n$ to order $\sqrt{n\log\log n}$. We also derive new upper and lower bounds on the capacity whose gap vanishes as $n\to\infty$. This vanishing-gap estimate is a key ingredient in the support-size argument: it shows that the output distribution induced by a capacity-achieving input must be close to a natural Beta-binomial reference output.

The techniques used in this work are inspired by a recent result on the additive Gaussian channel with a peak-power constraint \cite{wang2026improvedlowerboundcardinality}, where improved lower bounds were obtained on the support size of the capacity-achieving input distribution. The main idea of \cite{wang2026improvedlowerboundcardinality} is to combine the golden formula with lower bounds from best-approximation theory, which quantify how well a target output distribution can be approximated by finite Gaussian mixtures~\cite{BestApporximationGaussianMixture}. In the present paper, we adapt this strategy to the binomial setting by replacing Gaussian-mixture approximation with approximation by finite binomial mixtures.

\subsection{Main Contributions and Outline}

The main contribution of this paper is a new lower bound on the support size of the capacity-achieving input distribution for the binomial channel. In particular, we improve the previously known lower bound of order $\sqrt n$ to a lower bound of order $\sqrt{n\log\log n}$, up to explicit constants; see Theorem~\ref{thm:main-support}.

The proof is based on several ingredients, which may also be of independent interest.
First, in Section ~\ref{sec:preliminaries} we discuss the preliminaries we need in our proofs.
Then, in Section~\ref{sec:main-results}, we derive new explicit upper and lower bounds on the capacity. The lower bound is obtained by evaluating the mutual information at the reference input $X_r\sim\mathrm{Beta}(1/2,1/2)$, while the upper bound follows from the Xie--Barron minimax redundancy construction; see Theorems~\ref{thm:capacity-lb} and~\ref{thm:capacity-ub}. These bounds imply
\begin{equation}
    C(n)=\frac12\log\!\left(\frac{n\pi}{2e}\right)+o(1),
\end{equation}
and, more importantly for our purposes, the gap between the lower and upper bounds vanishes as $n\to\infty$; see Proposition~\ref{prop:capacity_gap}.

Second, we show that the output distribution $P_{Y_r}$ induced by the reference input $P_{X_r}$ is close to the capacity-achieving output distribution. More precisely, in Section~\ref{sec:main-results}, we prove that the relative entropy $\kl{P_{Y_r}}{P_{Y^\ast}}$ vanishes at the same rate as the capacity gap, and then convert this estimate into a $\chi^2$ bound; see Theorem~\ref{thm:optimal-output} and Proposition~\ref{prop:chi2-output}. This step is what connects the capacity estimates to the support-size question.

Third, in Section~\ref{sec:binomial-mixture-approximation}, we use a finite-mixture approximation lower bound for binomial mixtures; see Theorem~\ref{thm:main}. This result shows that the reference Beta-binomial output $P_{Y_r}$ cannot be approximated too well, in $\chi^2$ divergence, by the output induced by an input with only $K$ mass points. Combining this approximation lower bound with the upper bound on $\chi^2(P_{Y^\ast}\|P_{Y_r})$ forces the capacity-achieving input to have sufficiently many mass points.

Finally, Section~\ref{sec:proofs} contains the proofs, and Section~\ref{sec:conclusion} concludes with several open questions. 

\subsection{Notation}
All logarithms are to base $e$. Deterministic quantities are denoted by lower-case letters and random variables by upper-case letters. For a random variable $X$ and every measurable subset $\cA \subseteq \bbR$ the probability distribution is written as $P_{X}(\cA) = \bbP[X \in \cA]$. The support set of $P_X$ is
\begin{equation}
\supp(P_{X})=\{x:  \text{ $P_{X}( \mathcal{D})>0$ for every open set $ \mathcal{D} \ni x $} \}. 
\end{equation} 
When $X$ is discrete, we write $P_X(x)$ for $P_X(\{x\})$, i.e., $P_X$ is a probability mass function (pmf). The relative entropy and the $\chi^2$ divergence between the distributions $P$ and $Q$ are denoted by $\kl{P}{Q}$ and $\chi^2(P\,\| \,Q)$, respectively.

\section{Preliminaries}\label{sec:preliminaries}

\subsection{KKT Conditions}
The next result  provides KKT conditions for the optimization problem in~\eqref{eq:capacity_opt_problem}, allowing the study of the support properties of an optimal input distribution (see for example~\cite{CISS_2018}).

\begin{lemma}\label{lem:KKT}
$P_{X^\ast}$ is a capacity-achieving input distribution if and only if the following conditions hold:
\begin{align}
\kl{P_{Y|X}(\cdot|x)}{P_{Y^\ast}} &\le C(n), \qquad x\in[0,1], \label{eq:KKT-ineq}\\
\kl{P_{Y|X}(\cdot|x)}{P_{Y^\ast}} &= C(n), \qquad x\in\supp(P_{X^\ast}), \label{eq:KKT-eq}
\end{align}
where  $P_{Y^\ast}$ is the induced output distribution.
\end{lemma}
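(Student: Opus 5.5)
The plan is to use the standard convex-optimization route: write mutual information as a concave functional of $P_X$ and characterize its maximizer through the directional (Gateaux) derivative. The input set $\mathcal{P}([0,1])$ is convex and weak-* compact. The map $P_X \mapsto I(P_X) = \int \kl{P_{Y|X}(\cdot|x)}{P_Y}\,dP_X(x)$ is concave. It is also weakly continuous, since the output alphabet $\{0,\dots,n\}$ is finite and each $P_Y(y)=\int \binom{n}{y}x^y(1-x)^{n-y}dP_X(x)$ is integration against a continuous bounded function.

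\textbf{Derivative formula.} For $P_\lambda=(1-\lambda)P_{X^\ast}+\lambda Q$, I will compute
\begin{equation*}
\frac{d}{d\lambda}I(P_\lambda)\Big|_{\lambda=0^+}=\int \kl{P_{Y|X}(\cdot|x)}{P_{Y^\ast}}\,dQ(x)-I(P_{X^\ast}),
\end{equation*}
using $I(P_X)=H(P_Y)-\int H(P_{Y|X}(\cdot|x))\,dP_X$. The second term is linear in $P_X$. The derivative of $H$ along the segment gives $-\sum_y (P_{Q,Y}(y)-P_{Y^\ast}(y))\log P_{Y^\ast}(y)$.

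One caveat is needed for this step: it requires $P_{Y^\ast}(y)>0$ for all $y$. This holds for any input, since $P_Y(0)=0$ would force $X=1$ a.s., and $P_Y(n)=0$ would force $X=0$ a.s. Interior $y$ can only vanish if $X\in\{0,1\}$ a.s. In that degenerate case I will either treat the derivative as $+\infty$ in directions adding mass at $y$, which shows such inputs are not optimal for $n\ge2$, or note that the divergence is infinite there. This is the main technical obstacle, and I will handle it by showing the capacity-achieving output has full support.

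\textbf{Necessity.} If $P_{X^\ast}$ is optimal, the derivative is $\le 0$ for every $Q$. Taking $Q=\delta_x$ gives \eqref{eq:KKT-ineq}. Next, $\int \kl{P_{Y|X}(\cdot|x)}{P_{Y^\ast}}\,dP_{X^\ast}=I(P_{X^\ast})=C(n)$, and the integrand is $\le C(n)$, so equality holds $P_{X^\ast}$-a.e. The map $x\mapsto \kl{P_{Y|X}(\cdot|x)}{P_{Y^\ast}}$ is continuous on $[0,1]$: it is a finite sum of continuous terms, with the convention $0\log 0=0$, and $P_{Y^\ast}>0$. The set where it is strictly below $C(n)$ is therefore open and has $P_{X^\ast}$-measure zero, so it is disjoint from $\supp(P_{X^\ast})$. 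This gives \eqref{eq:KKT-eq}.

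\textbf{Sufficiency.} Suppose the two conditions hold. Integrating \eqref{eq:KKT-eq} gives $I(P_{X^\ast})=C(n)$ directly. Alternatively, for any $P_X$ with output $P_Y$, I will use the identity
\begin{equation*}
I(P_X)=\int \kl{P_{Y|X}(\cdot|x)}{P_{Y^\ast}}\,dP_X-\kl{P_Y}{P_{Y^\ast}}\le C(n).
\end{equation*}
Together with concavity, this shows $P_{X^\ast}$ is a maximizer.
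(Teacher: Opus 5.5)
Your proposal is correct. It is the standard Smith-type weak-derivative argument; the paper does not prove the lemma itself but cites it, via \cite{CISS_2018}.

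One small slip: your claim that $P_{Y^\ast}>0$ ``holds for any input'' is false as written, since $P_X=\delta_1$ gives $P_Y(0)=0$. Your fallback repairs this: whenever $P_{Y^\ast}$ lacks full support, the right derivative toward $Q=\delta_{1/2}$ is $+\infty$, so such an input cannot be optimal. That argument in fact works for every $n\ge 1$, not only for $n\ge 2$.
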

\subsection{Conditional Expectation Operators}
Define the forward conditional expectation operator $A: \mathbb{R}^{\{0,\dots,n\}} \to \mathbb{R}[x]$ by
\begin{equation}
    (A g)(x) = \mathbb{E}[g(Y) | X=x] = \sum_{y=0}^n g(y) P_{Y|X}(y|x).
\end{equation}
Using the falling factorial $y^{\underline{k}} = y(y-1)\cdots(y-k+1)$, the moments of the binomial distribution yield
\begin{equation} \label{eq:A_action}
    A(y^{\underline{k}}) = n^{\underline{k}} x^k.
\end{equation}
Thus, $A$ maps polynomials in $y$ of degree $k$ to polynomials in $x$ of degree $k$.

Define the backward (posterior) conditional expectation operator $B: L^1(P_{X}) \to \mathbb{R}^{\{0,\dots,n\}}$ by
\begin{equation}
    (B f)(y) = \mathbb{E}[f(X) | Y=y],
\end{equation}
where the expectation is taken with respect to the posterior distribution $P_{X|Y}(x|y)$ induced by the prior $X$. By Bayes' rule, the posterior is proportional to $x^{y-1/2} (1-x)^{n-y-1/2}$, which is a $\text{Beta}(y+1/2, n-y+1/2)$ distribution. The $k$-th moment of this Beta distribution is
\begin{equation} \label{eq:B_action}
    B(x^k) = \frac{(y+1/2)_k}{(n+1)_k},
\end{equation}
where $(a)_k = a(a+1)\cdots(a+k-1)$ is the rising factorial. Thus, $B$ maps polynomials in $x$ of degree $k$ to polynomials in $y$ of degree $k$.

These operators satisfy the adjoint relation: for any $f \in L^1(P_{X})$ and $g \in \mathbb{R}^{\{0,\dots,n\}}$,
\begin{equation} \label{eq:adjoint}
    \mathbb{E}_{Y}[g(Y) (B f)(Y)] = \mathbb{E}_{X}[f(X) (A g)(X)].
\end{equation}
\subsection{Orthogonal Polynomials}
The orthogonal polynomials for the $\text{Beta}(1/2, 1/2)$ distribution are the shifted Chebyshev polynomials of the first kind, defined as $\tilde{T}_k(x) = \cos(k \arccos(2x-1))$. They satisfy the orthogonality condition:
\begin{equation}
    \mathbb{E}_{X}[\tilde{T}_k(X) \tilde{T}_m(X)] = 
    \begin{cases} 
      1 & \text{if } k = m = 0, \\
      \frac{1}{2} & \text{if } k = m \ge 1, \\
      0 & \text{if } k \neq m.
   \end{cases}
\end{equation}
Since $A$ is a bijection between polynomials of degree $k \le n$, we can uniquely define a polynomial $H_k(y)$ of degree $k$ such that
\begin{equation}
    (A H_k)(x) = \tilde{T}_k(x).
\end{equation}
We claim that $H_k(y)$ are orthogonal polynomials for $Y$. Let $m < k \le n$. Since $B$ is bijective on polynomials of degree up to $n$, there exists a polynomial $q_m(x)$ of degree $m$ such that $B q_m = y^m$. Using the adjoint relation \eqref{eq:adjoint}:
\begin{align}
    \mathbb{E}_{Y}[H_k(Y) Y^m] &= \mathbb{E}_{Y}[H_k(Y) (B q_m)(Y)] \nonumber \\
    &= \mathbb{E}_{X}[q_m(X) (A H_k)(X)] \nonumber \\
    &= \mathbb{E}_{X}[q_m(X) \tilde{T}_k(X)] = 0,
\end{align}
because $\tilde{T}_k(x)$ is orthogonal to all polynomials of degree strictly less than $k$.

\subsection{Squared Norms of $H_k(Y)$}
Let $h_k = \mathbb{E}_{Y}[H_k(Y)^2]$. Using the adjoint relation again:
\begin{align}
    h_k &= \mathbb{E}_{Y}[H_k(Y) (B B^{-1} H_k)(Y)] \nonumber \\
    &= \mathbb{E}_{X}[(B^{-1} H_k)(X) (A H_k)(X)] \nonumber \\
    &= \mathbb{E}_{X}[(B^{-1} H_k)(X) \tilde{T}_k(X)].
\end{align}
Since $\tilde{T}_k(x)$ is orthogonal to lower-degree terms, we only need the leading coefficient of $B^{-1} H_k(x)$. The leading term of $\tilde{T}_k(x)$ for $k \ge 1$ is $2^{2k-1} x^k$.
From \eqref{eq:A_action}, $A(y^k) = n^{\underline{k}} x^k + \text{lower order terms}$, which implies $A^{-1}(x^k) = \frac{1}{n^{\underline{k}}} y^k + \dots$
Thus,
\begin{equation}
    H_k(y) = A^{-1}(\tilde{T}_k)(y) = \frac{2^{2k-1}}{n^{\underline{k}}} y^k + \dots
\end{equation}
From \eqref{eq:B_action}, $B(x^k) = \frac{1}{(n+1)_k} y^k + \dots$, which implies $B^{-1}(y^k) = (n+1)_k x^k + \dots$
Composing these gives
\begin{equation}
    (B^{-1} H_k)(x) = \frac{2^{2k-1}}{n^{\underline{k}}} (n+1)_k x^k + \dots
\end{equation}
Since $x^k = 2^{-(2k-1)} \tilde{T}_k(x) + \dots$, we can write
\begin{equation}
    (B^{-1} H_k)(x) = \frac{(n+1)_k}{n^{\underline{k}}} \tilde{T}_k(x) + \dots
\end{equation}
Therefore, for $k \ge 1$:
\begin{equation} \label{eq:hk_exact}
    h_k = \frac{(n+1)_k}{n^{\underline{k}}} \mathbb{E}_{X}[\tilde{T}_k(X)^2] = \frac{1}{2} \prod_{j=1}^k \frac{n+j}{n-j+1}.
\end{equation}

\subsection{Parseval's Identity}
We expand the density ratio in the orthogonal basis $H_k(y)$:
\begin{equation}
    \frac{P_Y(y)}{P_{Y^*}(y)} = \sum_{k=0}^n a_k H_k(y).
\end{equation}
The coefficients are given by
\begin{align}
    a_k &= \frac{1}{h_k} \mathbb{E}_{Y^*}\left[ \frac{P_Y(Y^*)}{P_{Y^*}(Y^*)} H_k(Y^*) \right] = \frac{1}{h_k} \mathbb{E}_Y[H_k(Y)] \nonumber \\
    &= \frac{1}{h_k} \mathbb{E}_X[(A H_k)(X)] = \frac{1}{h_k} \mathbb{E}_X[\tilde{T}_k(X)].
\end{align}
Let $\epsilon_k = \mathbb{E}_X[\tilde{T}_k(X)]$. Note that $\epsilon_0 = 1$ and $h_0 = 1$. By Parseval's identity, the $\chi^2$ divergence is
\begin{align}\label{eq:chi_square_div_Parseval}
    \chi^2(P_Y \| P_{Y^*}) &= \mathbb{E}_{Y^*}\left[ \left( \frac{P_Y(Y^*)}{P_{Y^*}(Y^*)} - 1 \right)^2 \right] \nonumber \\
    &= \sum_{k=1}^n a_k^2 h_k = \sum_{k=1}^n \frac{\epsilon_k^2}{h_k}.
\end{align}

\section{Main Results}\label{sec:main-results}
We now summarize the main statements proved in this paper.

\subsection{Asymptotically Tight Capacity Bounds}
The next theorem provides a lower bound on the capacity by evaluating mutual information at the reference input $X_r\sim\mathrm{Beta}(1/2,1/2)$ with probability density function (pdf)
\begin{equation}\label{eq:pdf_beta}
    f_{X_r}(x) = \frac{1}{\pi \sqrt{x(1-x)}}, \quad x \in (0,1).
\end{equation}
We will also need the expression for the output distribution induced by $X_r$, which is the Beta-binomial distribution~\cite{JohnsonKempKotz2005}:
\begin{equation} \label{eq:output_dist_X_r}
P_{Y_r}(y)
=
\frac{\Gamma\!\left(y+\tfrac12\right)\Gamma\!\left(n-y+\tfrac12\right)}
{\pi\,\Gamma(y+1)\Gamma(n-y+1)},
\qquad  y=0,\dots,n.
\end{equation}

\begin{theorem}\label{thm:capacity-lb}
For every $n\ge 1$,
\begin{equation}\label{eq:capacity-lb}
C(n) \ge \psi(n+1)-\log\!\bigl(1+\sqrt{3n+1}\bigr)+\frac12\log\!\left(\frac{3\pi}{2e}\right),
\end{equation}
where $\psi(\cdot)$ denotes the digamma function. In particular,
\begin{equation}\label{eq:capacity-lb-asymp}
C(n) \ge  \frac12\log\!\left(\frac{n\pi}{2e}\right) + r_{{\rm LB}}(n),
\end{equation}
where
\begin{equation}
r_{{\rm LB}}(n) := -\frac12\log\!\left(1+\frac{1}{3n}\right)+\log\!\left(1+\frac{1}{\sqrt{3n+1}}\right)+\frac{1}{n+1}.
\end{equation}
\end{theorem}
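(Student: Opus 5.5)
The plan is to lower bound $C(n)$ by $I(X_r;Y)$ with $X_r\sim\mathrm{Beta}(1/2,1/2)$, writing
\[
I(X_r;Y_r)=H(Y_r)-H(Y_r|X_r),
\]
and to bound the two terms separately. For the conditional entropy, I use the maximum-entropy bound for integer-valued variables: $H(\mathrm{Bin}(n,x))\le \tfrac12\log\bigl(2\pi e(nx(1-x)+\tfrac1{12})\bigr)$. This follows from the continuous Gaussian bound applied to $Y+U$ with $U$ uniform on $[0,1)$. Averaging over the arcsine law, substitute $x=\sin^2\theta$ with $\theta$ uniform, so that $nx(1-x)=\tfrac n4\sin^2 2\theta$. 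Then apply the classical integral $\frac1\pi\int_0^\pi\log(\alpha+\beta\sin^2\varphi)\,d\varphi=2\log\frac{\sqrt\alpha+\sqrt{\alpha+\beta}}{2}$ with $\alpha=\tfrac1{12}$ and $\beta=\tfrac n4$. Since $\sqrt\alpha+\sqrt{\alpha+\beta}=(1+\sqrt{3n+1})/\sqrt{12}$, this gives
\[
H(Y_r|X_r)\le \tfrac12\log(2\pi e)+\log\frac{1+\sqrt{3n+1}}{2\sqrt{12}}.
\]

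For the output entropy I use the explicit Beta-binomial pmf \eqref{eq:output_dist_X_r}. It gives
\[
-\log P_{Y_r}(y)=\log\pi+\log\frac{\Gamma(y+1)}{\Gamma(y+\frac12)}+\log\frac{\Gamma(n-y+1)}{\Gamma(n-y+\frac12)}.
\]
By convexity of $\log\Gamma$ (tangent line at $y+\tfrac12$), $\log\Gamma(y+1)-\log\Gamma(y+\tfrac12)\ge\tfrac12\psi(y+\tfrac12)$, and similarly for the other factor. The key observation is that $\mathbb{E}[\psi(Y_r+\tfrac12)]$ is computable exactly through the posterior. By the discussion of the operator $B$, $X_r|Y_r=y\sim\mathrm{Beta}(y+\frac12,n-y+\frac12)$, and for a $\mathrm{Beta}(a,b)$ variable $\mathbb{E}\log X=\psi(a)-\psi(a+b)$. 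Here $a+b=n+1$, so
\[
\mathbb{E}[\psi(Y_r+\tfrac12)]=\psi(n+1)+\mathbb{E}[\log X_r]=\psi(n+1)-2\log 2,
\]
and by symmetry the same holds for $n-Y_r$. Hence $H(Y_r)\ge\log\pi+\psi(n+1)-2\log2$.

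Subtracting the two bounds and collecting constants ($-2\log2+\log(2\sqrt{12})=\tfrac12\log3$) gives exactly \eqref{eq:capacity-lb}. For \eqref{eq:capacity-lb-asymp}, I first use a standard digamma lower bound, e.g. $\psi(n+1)\ge\log n+\tfrac1{2n}-\tfrac1{12n^2}\ge \log n+\tfrac1{n+1}$. I then rewrite $\log(1+\sqrt{3n+1})$ by factoring out $\sqrt{3n}$, producing the $\tfrac12\log(1+\tfrac1{3n})$ and $\log(1+1/\sqrt{3n+1})$ corrections. The result is matched against $\tfrac12\log\frac{n\pi}{2e}$; this is routine algebra, but the signs and the exact factoring must be checked against the definition of $r_{\rm LB}(n)$.

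The main obstacle is the output-entropy step: a naive bound on $H(Y_r)$ via Stirling would not produce the clean $\psi(n+1)$. The trick of combining the convexity bound on the gamma ratio with the posterior identity for $\mathbb{E}\log X$ is what makes the expectation exact. I would also double-check the $\tfrac1{12}$ discrete-entropy bound and the arcsine integral, since the constant $\sqrt{3n+1}$ depends on them.
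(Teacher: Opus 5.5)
Your derivation of the main inequality \eqref{eq:capacity-lb} is correct and follows the paper's route: $C(n)\ge I(X_r;Y_r)$, the $\tfrac1{12}$-corrected Gaussian bound on $H(Y|X=x)$ integrated against the arcsine law, and a digamma lower bound on $H(Y_r)$ via $\mathbb{E}[\psi(Y_r+\tfrac12)]=\psi(n+1)-2\log 2$. The differences are minor. You get $\log\frac{\Gamma(y+1)}{\Gamma(y+1/2)}\ge\tfrac12\psi(y+\tfrac12)$ in one step from convexity of $\log\Gamma$, where the paper uses Kershaw's inequality and then $\log(y+\tfrac14)\ge\psi(y+\tfrac12)$. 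You also derive the Beta-binomial digamma identity from the posterior instead of citing it. Both routes reach $H(Y_r)\ge\log\frac{\pi}{4}+\psi(n+1)$.

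The gap is in the ``in particular'' part. Your step $\log n+\frac1{2n}-\frac1{12n^2}\ge\log n+\frac1{n+1}$ is false for every $n\ge1$; already at $n=1$ it reads $\frac5{12}\ge\frac12$. The target inequality $\psi(n+1)\ge\log n+\frac1{n+1}$ is false too: from $\psi(x)<\log x-\frac1{2x}$ you get $\psi(n+1)=\psi(n)+\frac1n<\log n+\frac1{2n}\le\log n+\frac1{n+1}$.

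Moreover, no valid digamma estimate can produce the displayed $r_{\rm LB}$. Write $\log(1+\sqrt{3n+1})=\frac12\log 3+\frac12\log n+\frac12\log(1+\frac1{3n})+\log(1+\frac1{\sqrt{3n+1}})$. Then the right-hand side of \eqref{eq:capacity-lb} equals $\frac12\log\frac{n\pi}{2e}+[\psi(n+1)-\log n]-\frac12\log(1+\frac1{3n})-\log(1+\frac1{\sqrt{3n+1}})$. This is strictly below $\frac12\log\frac{n\pi}{2e}+r_{\rm LB}(n)$ for every $n$, for two reasons: $\psi(n+1)-\log n<\frac1{n+1}$, and the $\log(1+\frac1{\sqrt{3n+1}})$ term enters with the opposite sign.

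The paper's own proof uses $\psi(n+1)\ge\log(n+\frac13)-\frac1{n+1}$ and arrives at $r_{\rm LB}(n)=\frac12\log(1+\frac1{3n})-\log(1+\frac1{\sqrt{3n+1}})-\frac1{n+1}$. That is the displayed definition with all three signs reversed, so the statement is inconsistent with its proof (a sign typo). The reversed remainder, which still tends to $0$, is what can actually be proved.

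Your own valid bound $\psi(n+1)\ge\log n+\frac1{2n}-\frac1{12n^2}$ already gives this corrected remainder, since $\frac1{2n}-\frac1{12n^2}\ge\frac5{12n}>\frac1{3n}\ge\log(1+\frac1{3n})$. You were right to flag the signs. You should have finished the algebra rather than choosing an inequality to force a $+\frac1{n+1}$.
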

\begin{IEEEproof}
    The proof is given in Section~\ref{sec:proof_thm:capacity-lb}.
\end{IEEEproof}

The following upper bound is based on the Xie--Barron minimax redundancy construction~\cite{xie1997minimax}.

\begin{theorem}\label{thm:capacity-ub}
For all $n\ge 28$, we have
\begin{equation}\label{eq:capacity-ub}
C(n) \le \frac12\log\!\left(\frac{n\pi}{2e}\right) + r_{{\rm UB}}(n),
\end{equation}
where
\begin{equation}
r_{{\rm UB}}(n) := -\log\!\left(1-2\left(\frac{2e}{n\pi}\right)^{1/4}\right)
+\frac{10}{\log\frac{n\pi}{2e}}.
\end{equation}
\end{theorem}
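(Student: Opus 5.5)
Since $C(n)=\min_{Q_Y}\max_{x\in[0,1]}\kl{P_{Y|X}(\cdot|x)}{Q_Y}$ (the KKT conditions of Lemma~\ref{lem:KKT} give this min–max characterization), the upper bound will follow once we exhibit a single output distribution $Q$ with
\begin{equation*}
\max_{x\in[0,1]}\kl{P_{Y|X}(\cdot|x)}{Q}\le \tfrac12\log\tfrac{n\pi}{2e}+r_{\rm UB}(n).
\end{equation*}
Following Xie--Barron, I take $Q$ to be a mixture of Beta-binomial outputs. With $\epsilon=\epsilon_n=(2e/(n\pi))^{1/4}$, set $Q=(1-2\epsilon)P_{Y_r}+\epsilon Q_0+\epsilon Q_1$. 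Here $Q_0,Q_1$ are Beta-binomial outputs induced by priors concentrated near the endpoints, e.g.\ $\mathrm{Beta}(1/2,1)$ and its mirror image, or any prior with heavier mass near $0$ (resp.\ $1$). The first $-\log(1-2\epsilon)$ term in $r_{\rm UB}$ is exactly the price of the inequality $\log(1/Q)\le\log(1/((1-2\epsilon)P_{Y_r}))$. So for $x$ in the interior region $[\delta_n,1-\delta_n]$ I will bound $\kl{P_{Y|X}(\cdot|x)}{P_{Y_r}}-\log(1-2\epsilon)$. For $x$ near the boundary I will instead use $Q\ge\epsilon Q_0$ (or $\epsilon Q_1$), which costs $\log(1/\epsilon)=\tfrac14\log\tfrac{n\pi}{2e}$, well below the main term.

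The interior estimate is the core step. Write
\begin{equation*}
\kl{P_{Y|X}(\cdot|x)}{P_{Y_r}}=-H(\mathrm{Bin}(n,x))-\mathbb{E}[\log P_{Y_r}(Y)].
\end{equation*}
Using \eqref{eq:output_dist_X_r} and Stirling-type bounds $\Gamma(y+\frac12)/\Gamma(y+1)\ge (y+c)^{-1/2}$ (Gautschi/Kershaw inequalities), we get $-\log P_{Y_r}(y)\le\log\pi+\frac12\log(y+c)+\frac12\log(n-y+c)$. By Jensen, $\mathbb{E}\log(Y+c)\le\log(nx+c)$. For the binomial entropy we need an explicit lower bound of the form $H(\mathrm{Bin}(n,x))\ge\frac12\log(2\pi e\, nx(1-x))-O(1/(nx(1-x)))$. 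Combining these, the $\log(x(1-x))$ terms cancel up to $\log(1+c/(nx))$ errors, and we are left with $\frac12\log\frac{n\pi}{2e}+O(1/(n x(1-x)))$. Choosing $\delta_n$ so that $n\delta_n\asymp\log\frac{n\pi}{2e}$ makes all the error terms at most $10/\log\frac{n\pi}{2e}$, which is where the second term of $r_{\rm UB}$ comes from.

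For the boundary region $x<\delta_n$, the binomial is essentially Poisson with mean $nx\lesssim\log n$. I will bound $\kl{P_{Y|X}(\cdot|x)}{Q_0}$ directly. Its entropy term is $O(\log\log n)$, and $-\mathbb{E}\log Q_0(Y)\le\frac12\log n+O(\log\log n)$, provided $Q_0$ puts mass $\gtrsim n^{-1/2}$ on small $y$. With the extra $\log(1/\epsilon)=\frac14\log n+O(1)$, the total is $\frac34\log n+o(\log n)$, which exceeds $\frac12\log n$. So this crude route fails.

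This is the main obstacle, and I would fix it as follows. The first idea is to choose $Q_0$ so that $\kl{P_{Y|X}(\cdot|x)}{Q_0}\le\frac14\log n+O(1)$ near $0$. A prior concentrated on $[0,\delta_n]$ with density $\propto x^{-1/2}$ (Jeffreys restricted to the boundary) has an effective "alphabet" of size $\sqrt{n\delta_n}\approx\sqrt{\log n}$ in Fisher-volume, giving redundancy $\approx\frac12\log\log n$. Then $\kl{P_{Y|X}(\cdot|x)}{Q}\le\log(1/\epsilon)+\frac12\log\log n+O(1)=\frac14\log n+o(\log n)$, which is comfortably below the main term for $n\ge28$. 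Verifying this last inequality numerically at the threshold $n=28$ (where $1-2\epsilon>0$ requires $n\pi/(2e)>16$, i.e.\ $n\ge28$) closes the argument. Symmetry in $x\leftrightarrow1-x$ then handles $x>1-\delta_n$.
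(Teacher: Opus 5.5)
Your route is sound in outline but genuinely different from the paper's. The paper does no estimate of its own. It takes $Q_n=(1-2\eta_n)P_{Y_r}+2\eta_n\,\mathrm{Bin}(n,c_n/n)$, uses $C(n)\le\max_x\kl{P_{Y|X}(\cdot|x)}{Q_n}$, and quotes the Xie--Barron bound $\frac12\log\frac{n}{2\pi e}+\log\frac{\pi}{1-2\eta_n}+\frac{5}{c_n}$ with $\eta_n=(2e/(n\pi))^{1/4}$ and $c_n=\frac12\log\frac{n\pi}{2e}$. You instead prove a Xie--Barron-type bound yourself, in two parts. The first is an explicit interior estimate via Kershaw, Jensen and a binomial entropy lower bound. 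Since $\frac{1}{nx}+\frac{1}{n(1-x)}=\frac{1}{nx(1-x)}$, its error is $O(1/(nx(1-x)))$, which yields the $10/\log\frac{n\pi}{2e}$ term once $n\delta_n\asymp\log\frac{n\pi}{2e}$. The second is a spread-out boundary component (truncated Jeffreys on $[0,\delta_n]$) in place of a point mass. The citation buys brevity. Your version is self-contained and exposes the endpoint mechanism, which is where the real difficulty lies: at $x=0$ one needs $Q(0)\gtrsim\sqrt{2e/(\pi n)}$, while $P_{Y_r}(0)\approx(\pi n)^{-1/2}$. Your component gives $\epsilon_nQ_0(0)\asymp n^{-1/4}(n\delta_n)^{-1/2}\gg n^{-1/2}$. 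By contrast, the paper's point mass contributes only $2\eta_ne^{-c_n}=2(2e/(n\pi))^{3/4}$ at $y=0$. With the substituted parameters this gives $-\log Q_n(0)=\frac12\log(\pi n)+o(1)$, which exceeds $\frac12\log\frac{n\pi}{2e}+r_{\rm UB}(n)$ for large $n$ (the excess tends to $\frac12\log(2e)$). A point-mass modification needs $\eta_ne^{-c_n}\gtrsim n^{-1/2}$, and your construction sidesteps this.

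What remains is quantitative, but the statement is entirely about explicit constants, so it has to be done. Your closing step is wrong as written. Checking $n=28$ says nothing about other $n$, and your boundary estimate is not even below the main term there: your own $\frac14\log n+\frac12\log\log n\approx1.43$ at $n=28$, versus $\frac12\log\frac{n\pi}{2e}\approx1.39$. The correct closure has two pieces. For $28\le n\le 200$ the right-hand side exceeds $\log(n+1)\ge C(n)$, so the claim is trivial there. For larger $n$ you must use the full slack $-\log(1-2\epsilon_n)+10/\log\frac{n\pi}{2e}$ together with explicit constants. Two ingredients are also only asserted. The first is a lower bound $H(\mathrm{Bin}(n,x))\ge\frac12\log(2\pi e\,nx(1-x))-K/(nx(1-x))$ with a concrete $K$, valid down to $nx(1-x)$ of order $\log n$. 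The second is the uniform bound $\kl{P_{Y|X}(\cdot|x)}{Q_0}\le\frac12\log(n\delta_n)+K'$ on $[0,\delta_n]$, which the Fisher-volume heuristic does not prove. For the latter, lower-bound $Q_0(y)$ by the truncated Beta integral and show that the expected $-\log$ of the truncation probability is $O(1)$. This must include $x$ near $\delta_n$, where $Y$ exceeds $n\delta_n$ about half the time.
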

\begin{IEEEproof}
    The proof is given in Section~\ref{sec:proof_thm:capacity-up}.
\end{IEEEproof}

Combining Theorems~\ref{thm:capacity-lb} and~\ref{thm:capacity-ub}, we arrive at the following bound on the gap between the upper and lower bounds.
\begin{prop}\label{prop:capacity_gap}
    The gap between the capacity upper and lower bounds of Theorems~\ref{thm:capacity-lb} and~\ref{thm:capacity-ub} is
    \begin{align}
        \mathrm{Gap}(n) :=r_{{\rm UB}}(n)-r_{{\rm LB}}(n) \le \frac{17}{\log\!\left(\frac{n\pi}{2e}\right)}
    \end{align}
    for $n \ge 444$. Therefore, the gap vanishes as $n\to \infty$.
\end{prop}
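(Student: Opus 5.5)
We bound each term of $r_{\rm UB}(n)$ from above and each term of $-r_{\rm LB}(n)$ from above, using only elementary inequalities for $\log$. Throughout, set $L:=\log\!\left(\frac{n\pi}{2e}\right)$ and $u:=\left(\frac{2e}{n\pi}\right)^{1/4}=\mathrm{e}^{-L/4}$. For $n\ge 444$ we have $L\ge \log(444\pi/(2e))\approx 5.56$, hence $u\le \mathrm{e}^{-1.39}\approx 0.25$. The target is
\begin{equation}
-\log(1-2u)+\frac{10}{L}+\frac12\log\!\left(1+\frac{1}{3n}\right)-\log\!\left(1+\frac{1}{\sqrt{3n+1}}\right)-\frac{1}{n+1}\le \frac{17}{L}.
\end{equation}

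First, we drop the two nonpositive contributions coming from $r_{\rm LB}$: the term $-\log(1+1/\sqrt{3n+1})$ and the term $-1/(n+1)$. For the remaining term we use $\frac12\log(1+\frac{1}{3n})\le \frac{1}{6n}$. Since $n=\frac{2e}{\pi}\mathrm{e}^{L}$ grows exponentially in $L$, the inequality $\frac{1}{6n}\le \frac{c_1}{L}$ holds with a tiny constant; concretely, $\frac{L}{6n}$ is decreasing for $n\ge 444$ and is below $10^{-2}$ there.

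The main step is to control $-\log(1-2u)$ by a multiple of $1/L$. Since $2u\le 1/2$, convexity gives $-\log(1-t)\le 2\log 2\cdot t$ on $[0,1/2]$, so $-\log(1-2u)\le 4\log 2\, \mathrm{e}^{-L/4}$. It then suffices to show $L\mathrm{e}^{-L/4}\cdot 4\log 2\le c_2$ with $c_2+c_1\le 7$. The function $L\mathrm{e}^{-L/4}$ is maximized at $L=4$, where it equals $4/\mathrm{e}\approx 1.47$. This gives $c_2\le 4\log 2\cdot 1.47\approx 4.08$, which holds even without using $L\ge 5.56$.

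Summing the three contributions gives at most $\frac{10+4.08+0.01}{L}<\frac{17}{L}$, which proves the bound. The vanishing claim follows immediately, since $L\to\infty$. The step needing the most care is the second one: checking that the range $n\ge 444$ gives $2u\le 1/2$, so that the linearization of $-\log(1-2u)$ is valid with an explicit constant. The remaining checks are routine numerics at the threshold together with monotonicity in $n$.
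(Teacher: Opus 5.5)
Your proof is correct and takes essentially the paper's route: drop the two nonpositive $r_{\rm LB}$ terms, bound $\tfrac12\log(1+\tfrac{1}{3n})\le\tfrac{1}{6n}$, and linearize $-\log(1-2u)$ using $2u<1/2$ for $n\ge 444$, with your chord bound $-\log(1-t)\le 2\log 2\,t$ replacing the paper's $-\log(1-x)\le x+x^2$, and with $u=\mathrm{e}^{-L/4}$ and $\max_L L\mathrm{e}^{-L/4}=4/\mathrm{e}$ making explicit the step the paper only asserts (that the decaying terms are at most $7/L$), which even gives the sharper constant $14.09$. The only slip is numerical: $\log(444\pi/(2\mathrm{e}))\approx 5.547$ rather than $5.56$, so $u\le\mathrm{e}^{-1.39}$ is actually false at $n=444$, but $2u<1/2$ still holds because $444\pi/(2\mathrm{e})\approx 256.6>4^4$ (and $444$ is exactly the smallest $n$ for which it does).
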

\begin{IEEEproof}
Recall that
    \begin{align}
{\rm Gap}(n) &= r_{{\rm UB}}(n)-r_{{\rm LB}}(n) \\
&=-\log\!\left(1-2\left(\frac{2e}{n\pi}\right)^{1/4}\right)
+\frac{10}{\log\!\left(\frac{n\pi}{2e}\right)} \nonumber\\
&\quad +\frac12\log\!\left(1+\frac{1}{3n}\right)
-\log\!\left(1+\frac{1}{\sqrt{3n+1}}\right)-\frac{1}{n+1}.
\end{align}

Using $-\log(1-x)\le x + x^2$ for $0<x<1/2$, and noting that for sufficiently large $n$ (e.g., $n \ge 444$)
\[
x = 2\left(\frac{2e}{n\pi}\right)^{1/4} < \frac12,
\]
we obtain
\[
-\log\!\left(1-2\left(\frac{2e}{n\pi}\right)^{1/4}\right)
\le
2\left(\frac{2e}{n\pi}\right)^{1/4}
+
4\left(\frac{2e}{n\pi}\right)^{1/2}.
\]

Next, using $\log(1+x)\le x$ for $x\ge0$,
\[
\frac12\log\!\left(1+\frac{1}{3n}\right)
\le
\frac{1}{6n}.
\]
Moreover,
\[
-\log\!\left(1+\frac{1}{\sqrt{3n+1}}\right)
-\frac{1}{n+1}\le 0.
\]

Therefore,
\begin{align}
{\rm Gap}(n)
&\le
\frac{10}{\log\!\left(\frac{n\pi}{2e}\right)}
+
2\left(\frac{2e}{n\pi}\right)^{1/4}
+
4\left(\frac{2e}{n\pi}\right)^{1/2}
+
\frac{1}{6n}.
\end{align}

For all $n \ge 444$, the polynomial-decay terms are dominated by $n^{-1/4}$. Hence
\begin{align}
{\rm Gap}(n)
&\le
\frac{10}{\log\!\left(\frac{n\pi}{2e}\right)}
+
C_1 n^{-1/4} \\
&= \frac{10}{\log\!\left(\frac{n\pi}{2e}\right)}
+
5 n^{-1/4} \label{eq:pick_C1} \\
&\le \frac{17}{\log\!\left(\frac{n\pi}{2e}\right)}, \qquad n \ge 444, \label{eq:bound_Un}
\end{align}
where in~\eqref{eq:pick_C1} we picked $C_1 = 5$.
\end{IEEEproof}
As we will see in Section~\ref{sec:proof_lowerbound_supp}, characterizing the capacity up to a vanishing gap in $n$ is essential to establish the new lower bound on the support size of the capacity-achieving distribution $P_{X^\ast}$.

\subsection{Asymptotically Optimal Output Distribution}
In Proposition~\ref{prop:capacity_gap}, we have shown that the input $X_r$ is asymptotically optimal from the rate perspective. We now show that the output distribution induced by $X_r$ also approaches the optimal output distribution.

The next result makes this connection precise.
\begin{theorem}\label{thm:optimal-output}
Let $P_{Y^\ast}$ be the output distribution induced by the capacity-achieving input $P_{X^\ast}$, and let $P_{Y_r}$ be defined as in \eqref{eq:output_dist_X_r}. Then, the following statements hold:
\begin{itemize}
    \item There exists an absolute constant $c_\star>0$ such that, for every $n\ge 1$ and every
$y\in\{0,\dots,n\}$,
\begin{equation}
\frac{P_{Y_r}(y)}{P_{Y^\ast}(y)} \ge c_\star, \label{eq:lower_bound_ratio_output_dist}
\end{equation}
where one can take $c_\star= \frac{1}{4e\,3^{7/4}\pi^{7/4}}$; and 
\item For $n \ge 444$,
\begin{equation}
    \kl{P_{Y_r}}{P_{Y^{\ast}}} 
    \le \frac{17}{\log\!\left(\frac{n\pi}{2e}\right)}. \label{eq:upper_gap}
\end{equation}
\end{itemize}
\end{theorem}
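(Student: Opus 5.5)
The plan is to prove the second bullet first, since it follows from earlier results, and then spend most of the effort on the pointwise bound \eqref{eq:lower_bound_ratio_output_dist}.

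\textbf{Relative entropy bound.} I will use the golden formula. For any input $P_X$ with induced output $P_Y$,
\begin{equation*}
I(X;Y)=\mathbb{E}_{X}\bigl[\kl{P_{Y|X}(\cdot|X)}{P_{Y^\ast}}\bigr]-\kl{P_Y}{P_{Y^\ast}}.
\end{equation*}
Apply this with $X=X_r$. By the KKT inequality \eqref{eq:KKT-ineq} of Lemma~\ref{lem:KKT}, the integrand is at most $C(n)$ for every $x\in[0,1]$. This gives
\begin{equation*}
\kl{P_{Y_r}}{P_{Y^\ast}}\le C(n)-I(X_r;Y_r).
\end{equation*}
Theorem~\ref{thm:capacity-lb} is, by construction, a lower bound on $I(X_r;Y_r)$ itself: it is obtained by evaluating the mutual information at $X_r$. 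Theorem~\ref{thm:capacity-ub} bounds $C(n)$ from above. The difference is therefore at most $\mathrm{Gap}(n)$, and Proposition~\ref{prop:capacity_gap} gives \eqref{eq:upper_gap} for $n\ge 444$.

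\textbf{Pointwise ratio bound.} I need a lower bound on $P_{Y_r}(y)$ and an upper bound on $P_{Y^\ast}(y)$ of matching shape.

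For $P_{Y_r}$, I will use \eqref{eq:output_dist_X_r} together with Gautschi's inequality $\Gamma(m+\tfrac12)/\Gamma(m+1)\ge (m+1)^{-1/2}$. This gives
\begin{equation*}
P_{Y_r}(y)\ge \frac{1}{\pi\sqrt{(y+1)(n-y+1)}}.
\end{equation*}

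For $P_{Y^\ast}$, I will use two estimates, one for the bulk and one for the edges.
\begin{itemize}
\item \emph{Bulk.} Since $P_{Y^\ast}(y)=\mathbb{E}[P_{Y|X}(y|X^\ast)]$, we have $P_{Y^\ast}(y)\le \max_x P_{Y|X}(y|x)=\binom{n}{y}(y/n)^y(1-y/n)^{n-y}$. Robbins' version of Stirling's formula bounds this by $c\sqrt{n/(y(n-y))}$ for $1\le y\le n-1$. Comparing with the bound on $P_{Y_r}$, the ratio is bounded below by a constant uniformly in this range, because $y(n-y)/n$ and $(y+1)(n-y+1)/n$ are comparable when $y,n-y\ge 1$.
\item \emph{Edges.} The bulk estimate fails only at $y\in\{0,n\}$, where the maximum-likelihood bound is $1$ but $P_{Y_r}(0)\approx 1/(\pi\sqrt n)$. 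There I will use the KKT equality \eqref{eq:KKT-eq}. I expect $0,1\in\supp(P_{X^\ast})$, which I would justify by symmetry and the structural results of \cite{Zieder2024}, or else via a direct perturbation argument. Since $P_{Y|X}(\cdot|0)=\delta_0$, the KKT equality at $x=0$ gives $-\log P_{Y^\ast}(0)=C(n)$, that is, $P_{Y^\ast}(0)=e^{-C(n)}$.
\end{itemize}
At the edges I will then insert the explicit capacity lower bound \eqref{eq:capacity-lb}. Using $\psi(n+1)\ge\log(n+\tfrac12)$ and $1+\sqrt{3n+1}\le c\sqrt n$, this yields $e^{-C(n)}\lesssim \sqrt{2e/(3\pi)}\,\sqrt{3n+1}/(n+1)$. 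The factors of $3\pi$ and $e$ in $c_\star$ presumably come from exactly this step combined with the Stirling constants. Taking the minimum over the two regimes, and checking small $n$ separately with the crude bound $P_{Y^\ast}\le 1$, should give an absolute constant of the stated form.

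\textbf{Main obstacle.} The hardest part is the edge regime. If the endpoints cannot be shown to lie in the support, I would instead bound $P_{Y^\ast}(0)$ through the KKT equality at the smallest support point $x_0$. That route requires showing $x_0=O(1/n)$ and controlling $\log P_{Y^\ast}(y)$ for small $y$ via convexity of the divergence in $x$, which is messier. Matching the precise constant $c_\star=\frac{1}{4e\,3^{7/4}\pi^{7/4}}$ is secondary; any absolute constant suffices for the later $\chi^2$ comparison.
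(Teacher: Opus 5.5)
\textbf{The bulk estimate does not give a constant.} Your relative-entropy argument is the paper's argument: golden formula, then the KKT inequality, then the capacity gap. Your endpoint estimate is also sound. Combining $P_{Y^\ast}(0)=e^{-C(n)}$ with \eqref{eq:capacity-lb} and $P_{Y_r}(0)\ge 1/\sqrt{\pi(n+1)}$ does bound $P_{Y_r}(0)/P_{Y^\ast}(0)$ below by an absolute constant.

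The gap is the bulk step, and it fails for every $1\le y\le n-1$, not only at $y\in\{0,n\}$. Robbins' bound gives $\max_x P_{Y|X}(y|x)\approx\sqrt{n/(2\pi y(n-y))}$, while $P_{Y_r}(y)\approx 1/(\pi\sqrt{y(n-y)})$. What you actually obtain is therefore $\frac{P_{Y_r}(y)}{P_{Y^\ast}(y)}\gtrsim\sqrt{\frac{y(n-y)}{n\,(y+1)(n-y+1)}}\asymp n^{-1/2}$. For example, at $y=\lfloor n/2\rfloor$ you are comparing $P_{Y_r}(y)\approx 2/(\pi n)$ with $\sqrt{2/(\pi n)}$.

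Your ``comparability'' remark treats the lower bound on $P_{Y_r}(y)$ as $1/\sqrt{(y+1)(n-y+1)/n}$. It is actually $1/\sqrt{(y+1)(n-y+1)}$, which is a factor $\sqrt n$ smaller. Sharper constants cannot repair this. Any bound on $P_{Y^\ast}(y)$ that does not use optimality must also hold for the output of the point input $\delta_{y/n}$, which has mass $\asymp n^{-1/2}$ at $y$. A bound of order $1/\sqrt{(y+1)(n-y+1)}$ therefore needs a global use of the optimality of $P_{X^\ast}$: it cannot pile up mass anywhere in the interior.

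\textbf{How the paper supplies the global step.} The paper works with $h_y=\log\bigl(P_{Y^\ast}(y)/P_{Y_r}(y)\bigr)$ and proceeds as follows.
\begin{itemize}
\item It uses symmetry and a claimed discrete concavity of $h$.
\item It anchors the endpoint through the same KKT identity $P_{Y^\ast}(0)=e^{-C(n)}$, but in the opposite direction from yours: it shows $h_0\ge -B_0$ from a capacity \emph{upper} bound.
\item A chord argument then gives $h_y\ge\frac13 h_0+\frac23\max_z h_z$ on the middle third $I_n$.
\item Finally, $P_{Y_r}(I_n)\ge 1/(6\pi)$ together with $\sum_y P_{Y^\ast}(y)=1$ caps $\max_z h_z$.
\end{itemize}
This is where $c_\star=\exp\bigl(-\frac32\log(6\pi)-\frac12 B_0\bigr)$ comes from, not from Stirling constants.

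One caution if you adopt this route: as written, the concavity step is not justified. The inequality $\mathbb{E}[U]\,\mathbb{E}[U^{-1}]\ge 1$ gives a \emph{lower} bound on $\Delta^2\log P_{Y^\ast}(y)$. Subtracting the exact $\Delta^2\log P_{Y_r}(y)$ then yields $\Delta^2 h_y\ge\log\frac{(y-\frac12)(n-y-\frac12)}{(y+\frac12)(n-y+\frac12)}$, not $\le$. So the idea you are missing is the combination of a shape constraint, an endpoint anchor and normalization. The shape constraint itself still needs a correct proof.
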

\begin{IEEEproof} 
   The proof is given in Section~\ref{proofoftheorem3}.
\end{IEEEproof}

For our purposes, it will be easier to work with $\chi^2$-divergence. To that end, we have the following result.

\begin{prop}\label{prop:chi2-output}
For $n \ge 444$,
    \begin{equation}
    \chi^2(P_{Y^\ast} \| P_{Y_r}) \le \zeta(c_\star^{-1}) \frac{17}{\log\!\left(\frac{n\pi}{2e}\right)},
    \label{eq:upper_chi2}
\end{equation}
where 
\begin{equation}
    \zeta(t) = \frac{(t-1)^2}{t-1-\log(t)}.
\end{equation}
\end{prop}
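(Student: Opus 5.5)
We plan to compare the two divergences pointwise through the likelihood ratio. Set $L(y) := P_{Y^\ast}(y)/P_{Y_r}(y)$. Then
\begin{equation}
\chi^2(P_{Y^\ast}\|P_{Y_r}) = \mathbb{E}_{Y_r}\bigl[(L(Y_r)-1)^2\bigr],
\end{equation}
and, since $\mathbb{E}_{Y_r}[L(Y_r)]=1$, the reverse relative entropy can be written as
\begin{equation}
\kl{P_{Y_r}}{P_{Y^\ast}} = \mathbb{E}_{Y_r}\bigl[-\log L(Y_r)\bigr] = \mathbb{E}_{Y_r}\bigl[L(Y_r)-1-\log L(Y_r)\bigr].
\end{equation}
Each summand $\phi(L):=L-1-\log L$ is nonnegative, and the integrand $(L-1)^2$ of the $\chi^2$ divergence is comparable to $\phi(L)$ whenever $L$ is bounded. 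The first bullet of Theorem~\ref{thm:optimal-output} gives exactly such a bound: $L(y)\le c_\star^{-1}$ for every $y\in\{0,\dots,n\}$.

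The key step is the following elementary inequality. For $0< t\le T$ with $T:=c_\star^{-1}>1$,
\begin{equation}
(t-1)^2 \le \zeta(T)\,\bigl(t-1-\log t\bigr).
\end{equation}
To prove it, we show that $g(t):=(t-1)^2/(t-1-\log t)$, extended by continuity as $g(1)=2$, is nondecreasing on $(0,\infty)$. Then $\sup_{0<t\le T} g(t)=g(T)=\zeta(T)$. For monotonicity we compute the sign of $g'$. Using $\phi(t)=t-1-\log t$ and $\phi'(t)=1-1/t=(t-1)/t$, the sign of $g'$ equals the sign of
\begin{equation}
2(t-1)\phi(t)-(t-1)^2\,\frac{t-1}{t}=(t-1)\Bigl[2\phi(t)-\frac{(t-1)^2}{t}\Bigr].
\end{equation}
Let $u(t):=2(t-1-\log t)-(t-1)^2/t = t-2\log t-1/t$. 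Then $u(1)=0$ and $u'(t)=(1-1/t)^2\ge0$, so $u$ has the same sign as $t-1$. Hence $g'\ge0$ everywhere, and the key inequality follows.

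We then apply the inequality pointwise with $t=L(y)$ and average under $P_{Y_r}$. This gives
\begin{equation}
\chi^2(P_{Y^\ast}\|P_{Y_r})\le \zeta(c_\star^{-1})\,\kl{P_{Y_r}}{P_{Y^\ast}}.
\end{equation}
Bounding the right-hand side by \eqref{eq:upper_gap}, valid for $n\ge444$, yields the claim.

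Two points need checking. First, $L(y)>0$ for all $y$, which is needed for the logarithm to be finite. This holds because $P_{Y^\ast}(y)>0$: the support of $P_{X^\ast}$ contains interior points or is symmetric about $\tfrac12$ with $0,1$ included, so every $y$ has positive probability. Alternatively, if $L(y)=0$ the summand is $+\infty$ on the relative-entropy side and the inequality holds trivially. Second, $c_\star^{-1}>1$, which holds because $c_\star$ is tiny. The only genuinely delicate step is the monotonicity of $g$, which the computation above handles; the rest is bookkeeping.
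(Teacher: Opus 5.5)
Your argument is correct and follows the paper's route. The paper gets $\chi^2(P_{Y^\ast}\|P_{Y_r})\le \zeta(\beta_1^{-1})\,\kl{P_{Y_r}}{P_{Y^\ast}}$ by citing Sason--Verd\'u (Eq.~(169) with Thm.~6), then uses $\beta_1\ge c_\star$ and monotonicity of $\zeta$; you instead prove that comparison yourself through the pointwise bound $(t-1)^2\le\zeta(T)(t-1-\log t)$ on $(0,T]$, which is exactly the content of the cited result, and then plug in Theorem~\ref{thm:optimal-output} as the paper does. One minor slip: symmetry with $0,1\in\supp(P_{X^\ast})$ does not by itself make every $P_{Y^\ast}(y)$ positive, but your fallback, or simply the finiteness of $\kl{P_{Y_r}}{P_{Y^\ast}}$ from Theorem~\ref{thm:optimal-output}, covers that point.
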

\begin{IEEEproof}
    By combining \cite[Eq.~(169)]{sason2016f} and \cite[Thm.~6]{sason2016f}, we obtain
\begin{equation}\label{eq:from_chi2_to_kl}
    \chi^2(P_{Y^\ast} \| P_{Y_r}) \le \zeta(\beta_1^{-1}) \kl{P_{Y_r}}{P_{Y^{\ast}}},
\end{equation}
where
\begin{equation}\label{eq:lower_beta1}
    \beta_1 := \inf_y \frac{P_{Y_r}}{P_{Y^\ast}}(y) \ge c_\star >0.
\end{equation}
Putting together~\eqref{eq:from_chi2_to_kl} and~\eqref{eq:upper_gap} yields
\begin{align}
    \chi^2(P_{Y^\ast} \| P_{Y_r}) &\le \zeta(\beta_1^{-1}) \frac{17}{\log\!\left(\frac{n\pi}{2e}\right)} \\
    &\le \zeta(c_\star^{-1}) \frac{17}{\log\!\left(\frac{n\pi}{2e}\right)},
\end{align}
where in the last step we used~\eqref{eq:lower_beta1} and that $t \mapsto \zeta(t)$ is increasing for $t\ge 1$.
\end{IEEEproof}

\subsection{A Lower Bound on the Support Size}

We now state the main result of this work, which is an improved lower bound on the cardinality of $P_{X^\ast}$.

\begin{theorem}\label{thm:main-support}
For every $n \ge 1$,
\begin{equation}\label{eq:explicit-support}
|\supp(P_{X^\ast})| \ge \max  \left\{2, \,  \rme^{C(n)}, \,  \frac{1}{8}
\sqrt{
n \,
\log^{+}\!\left(
\frac{\log\!\left(\frac{n\pi}{2e}\right)}{37850 }
\right) 
} \right \},
\end{equation}
where $C(n)$ is the capacity defined in \eqref{eq:capacity_opt_problem} and $\log^+(x)=\max\{0,\log(x)\}$.
\end{theorem}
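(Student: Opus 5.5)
The bound in \eqref{eq:explicit-support} is a maximum of three terms, and I would establish each separately.

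\emph{The bound $2$.} A single mass point $x_0$ makes $Y$ independent of the input and gives $I(X;Y)=0$. But $C(n)>0$ for every $n\ge1$; for example, the input uniform on $\{0,1\}$ yields positive mutual information. Hence $P_{X^\ast}$ has at least two support points.

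\emph{The bound $\rme^{C(n)}$.} Let $K=|\supp(P_{X^\ast})|$ (finite by \cite{Zieder2024}). Then
\[
C(n)=I(X^\ast;Y^\ast)\le H(X^\ast)\le \log K,
\]
so $K\ge \rme^{C(n)}$.

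\emph{The $\sqrt{n\log\log n}$ term.} This is the substantive part and uses the chain described in the introduction. Assume $K=|\supp(P_{X^\ast})|$. I would carry out the following steps.
\begin{enumerate}[label=(\roman*)]
\item \textbf{Upper bound.} Proposition~\ref{prop:chi2-output} gives, for $n\ge444$,
\[
\chi^2(P_{Y^\ast}\|P_{Y_r})\le \frac{17\,\zeta(c_\star^{-1})}{\log(n\pi/(2e))}.
\]
\item \textbf{Lower bound.} Apply Parseval's identity \eqref{eq:chi_square_div_Parseval} with reference $P_{Y_r}$; here $H_k$ and $h_k$ are exactly those of $X_r\sim\mathrm{Beta}(1/2,1/2)$. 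This gives
\[
\chi^2(P_{Y^\ast}\|P_{Y_r})=\sum_{k=1}^n \frac{\epsilon_k^2}{h_k},\qquad \epsilon_k=\mathbb{E}[\tilde T_k(X^\ast)].
\]
Writing $X^\ast=\tfrac{1+\cos\Theta}{2}$, we have $\epsilon_k=\mathbb{E}[\cos(k\Theta)]$, where $\Theta$ takes at most $K$ values.
\item \textbf{Approximation-theory step (Theorem~\ref{thm:main}).} A $K$-atom measure cannot have all its trigonometric moments small over a long range of frequencies. Concretely, for a suitable $m$ (about $2K$), I would use a nonnegative Fejér-type kernel
\[
F(\theta)=\sum_{|k|\le m} c_k\cos(k\theta)
\]
with $c_0$ normalized. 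Evaluating $\mathbb{E}[F(\Theta)]$ by pigeonholing atoms, or via a Vandermonde/Hankel argument, should give
\[
\sum_{k\le m}\epsilon_k^2\gtrsim m/K^2 .
\]
\item \textbf{Weights.} From \eqref{eq:hk_exact}, $h_k\approx \tfrac12\exp(k^2/n)$. Combining with step (iii) gives
\[
\chi^2\gtrsim \exp(-c\,m^2/n).
\]
\end{enumerate}
I expect the combinatorial/approximation bound in step (iii) to be the main obstacle: one must choose $m\asymp K$ so that the atoms cannot cancel the low moments, while controlling constants.

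\emph{Combining.} Comparing steps (i) and (iv) gives
\[
\exp(-c'K^2/n)\lesssim \frac{1}{\log n},
\]
hence $K^2\gtrsim n\log\log n$. Tracking constants should produce $\tfrac18\sqrt{n\log^+(\log(n\pi/2e)/37850)}$; the constant $37850$ absorbs $17\zeta(c_\star^{-1})$ and the constant from the approximation bound.

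\emph{Small $n$.} For $n<444$, or whenever the $\log^+$ argument is at most $1$, the third term is zero or is dominated by the other two terms, so the bound holds trivially.
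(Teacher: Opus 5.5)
Your architecture matches the paper's: the terms $2$ and $\rme^{C(n)}$ are handled correctly (the paper simply cites them), and the paper also sandwiches $\chi^2(P_{Y^\ast}\|P_{Y_r})$ between Proposition~\ref{prop:chi2-output} and a finite-mixture lower bound. The gap is in step (iii), which carries the whole argument and does not deliver what step (iv) uses.

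You claim $\sum_{k\le m}\epsilon_k^2\gtrsim m/K^2$ with $m\asymp K$. That is only $\gtrsim 1/K$, and in (iv) you silently drop this prefactor. With the prefactor kept, the comparison reads $K^{-1}e^{-cK^2/n}\lesssim 1/\log n$. Since $K\ge \rme^{C(n)}\asymp\sqrt n$, the left side is $O(n^{-1/2})$, so the inequality holds for every $K$ and gives no lower bound on $K$.

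The choice $m\approx 2K$ is also genuinely too small. Take the $(K+1)$-point Gauss--Chebyshev rule (nodes $\theta_j=(2j-1)\pi/(2K+2)$, equal weights) and remove one node $\theta_{j_0}$. The result is a $K$-atom input with $\epsilon_k=-\tfrac1K\cos(k\theta_{j_0})$ for $1\le k\le 2K+1$, hence $\sum_{k=1}^{2K+1}\epsilon_k^2=O(1/K)$. What you need is a $K$-independent constant lower bound on $\sum_{k=1}^m\epsilon_k^2$ for some $m=O(K)$ with $m\ge(2+\delta)K$.

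The paper gets this from Theorem~\ref{thm:main}, which you could simply invoke with $L=2K$:
\begin{itemize}
\item The matrix $M_{ij}=\mathbb{E}[\tilde T_i(X^\ast)\tilde T_j(X^\ast)]$, $0\le i,j\le L-1$, has rank at most $K$. Eckart--Young--Mirsky then gives $\|M-D\|_F^2\ge (L-K)/4$.
\item The linearization $\tilde T_i\tilde T_j=\tfrac12(\tilde T_{i+j}+\tilde T_{|i-j|})$ gives $\|M-D\|_F^2\le L\sum_{k=1}^{2L-2}\epsilon_k^2$.
\item With $L=2K$ this yields $\sum_{k=1}^{4K-2}\epsilon_k^2\ge 1/8$. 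Combined with Proposition~\ref{prop:uniform-explicit}, $\chi^2(P_{Y^\ast}\|P_{Y_r})\ge\tfrac14\exp\bigl(-(4K-2)^2/(n-4K+3)\bigr)$ whenever $K\le (n+2)/4$.
\item The case $K>(n+2)/4$ must be split off, since the truncated Parseval sum needs $4K-2\le n$, but it is trivial.
\end{itemize}
That $\tfrac14$ is why the constant is $68\,\zeta(c_\star^{-1})\le 37850$. Solving the resulting quadratic (Proposition~\ref{prop:invert-chi2}) produces the $\tfrac18$.

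Your Fejér idea can also be repaired: keep the full diagonal instead of pigeonholing a single atom. Symmetrize $\Theta$ to a measure on the circle with at most $2K$ atoms. Using $F_m\ge 0$ and $F_m(0)=m+1$ gives $1+2\sum_{k=1}^m\epsilon_k^2\ge (m+1)/(2K)$, i.e.\ $\sum_{k=1}^m\epsilon_k^2\ge (m+1-2K)/(4K)$. At $m=4K-2$ this is at least $1/4$, a valid and slightly stronger substitute for the Eckart--Young step.
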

\begin{IEEEproof}
 See Section~\ref{sec:proof_lowerbound_supp}. 
\end{IEEEproof}

We make the following remarks:
\begin{itemize}
    \item  The first two bounds in \eqref{eq:explicit-support} have been shown in \cite{zieder2024binomial}.
    \item In view of Theorems~\ref{thm:capacity-lb} and~\ref{thm:capacity-ub}, we have that  $\rme^{C(n)} =\Theta(\sqrt{n})$. Therefore, for large $n$ the bound in \eqref{eq:explicit-support} can be expressed as 
    \begin{equation}
        |\supp(P_{X^\ast})| \ge \Omega\!\left(\sqrt{n\log\log n}\right).
    \end{equation}
    \item The explicit constants are not optimized. In particular, the improvement over the order-$\sqrt n$ lower bound is asymptotic and becomes visible only for sufficiently large $n$.
\end{itemize}

\section{On the Best Approximation Theory of Finite Binomial Mixtures}
\label{sec:binomial-mixture-approximation}

The key result for providing a lower bound on the support size is the following theorem. It quantifies the best possible approximation of the reference Beta-binomial distribution $P_{Y_r}$ by a binomial mixture with finitely many components. Its proof follows from an adaptation of the trigonometric moment method for mixture of Gaussian distributions of \cite[Thm.~7]{BestApporximationGaussianMixture}.

\begin{theorem} \label{thm:main}
Let $X$ be a discrete random variable with $K$ mass points in $[0,1]$. Let $Y$ be the output of the binomial channel induced by $X$. Let $Y_r$ be the output distribution induced by the reference input $X_r \sim \text{Beta}(1/2, 1/2)$. Then, for any integer $L$ such that $K < L \le \frac{n+2}{2}$, we have
\begin{equation}
    \chi^2(P_Y \| P_{Y_r}) \ge B_n(L) := \frac{L-K}{2L \prod_{j=1}^{2L-2} \frac{n+j}{n-j+1}}.
    \label{lowerchi}
\end{equation}
\end{theorem}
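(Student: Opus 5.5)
The plan is to apply the Parseval identity \eqref{eq:chi_square_div_Parseval} with the reference output $P_{Y_r}$ in place of $P_{Y^\ast}$. The polynomials $H_k$ and their norms $h_k$ in \eqref{eq:hk_exact} were built from the prior $X_r\sim\mathrm{Beta}(1/2,1/2)$, so they are orthogonal with respect to $P_{Y_r}$. This gives
\begin{equation*}
\chi^2(P_Y\|P_{Y_r})=\sum_{k=1}^n \frac{\epsilon_k^2}{h_k}, \qquad \epsilon_k=\mathbb{E}[\tilde T_k(X)],
\end{equation*}
where $X$ is the $K$-point input. Each factor $\frac{n+j}{n-j+1}$ exceeds $1$, so $h_k$ is increasing in $k$. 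The condition $L\le\frac{n+2}{2}$ is equivalent to $2L-2\le n$, so the first $2L-2$ terms are present in the sum. Keeping only those terms,
\begin{equation*}
\chi^2(P_Y\|P_{Y_r})\ge \frac{1}{h_{2L-2}}\sum_{k=1}^{2L-2}\epsilon_k^2=\frac{2}{\prod_{j=1}^{2L-2}\frac{n+j}{n-j+1}}\sum_{k=1}^{2L-2}\epsilon_k^2 .
\end{equation*}
The theorem therefore reduces to a purely moment-theoretic inequality: for every $K$-atom law on $[0,1]$,
\begin{equation*}
\sum_{k=1}^{2L-2}\epsilon_k^2\ge \frac{L-K}{4L}.
\end{equation*}

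To prove this I will use a rank argument in the spirit of \cite[Thm.~7]{BestApporximationGaussianMixture}, with Chebyshev moments in place of trigonometric ones. Write $x=(1+\cos\theta)/2$, so that $\tilde T_k(x)=\cos(k\theta)$. Consider the $L\times L$ Gram matrix $M_{jl}=\mathbb{E}[\tilde T_j(X)\tilde T_l(X)]$ for $0\le j,l\le L-1$. The product formula $\tilde T_j\tilde T_l=\frac12(\tilde T_{j+l}+\tilde T_{|j-l|})$ gives
\begin{equation*}
M_{jl}=\tfrac12\bigl(\epsilon_{j+l}+\epsilon_{|j-l|}\bigr).
\end{equation*}
Since $M=\sum_i p_i v_iv_i^\top$ with $v_i=(\tilde T_j(x_i))_j$, it has rank at most $K$. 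The same matrix computed for $X_r$ is $M_r=\mathrm{diag}(1,\frac12,\dots,\frac12)$, by the orthogonality relations, and all its eigenvalues are at least $\frac12$. Set $D=M-M_r$. Its entries are $\frac12(\delta_{j+l}+\delta_{|j-l|})$, where $\delta_0=0$ and $\delta_k=\epsilon_k$ for $k\ge1$; only indices $k\le 2L-2$ occur.

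By Weyl's inequality, $M=M_r+D$ has at least $L-K$ zero eigenvalues. The $(L-K)$-dimensional null space of $M$ meets $M_r$ with Rayleigh quotient at least $\frac12$, so $D$ has at least $L-K$ eigenvalues of absolute value at least $\frac12$. Hence
\begin{equation*}
\|D\|_F^2\ge \frac{L-K}{4}.
\end{equation*}
For the upper bound, use $(a+b)^2\le 2a^2+2b^2$ to get
\begin{equation*}
\|D\|_F^2\le \tfrac12\sum_{j,l}\bigl(\delta_{j+l}^2+\delta_{|j-l|}^2\bigr).
\end{equation*}
A given value $k$ arises as $j+l$ at most $L$ times and as $|j-l|$ at most $2L$ times. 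This gives $\|D\|_F^2\le \frac{3L}{2}\sum_{k=1}^{2L-2}\epsilon_k^2$, and hence $\sum_k\epsilon_k^2\ge\frac{L-K}{6L}$.

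The main obstacle is closing the constant gap between $\frac{L-K}{6L}$ and the target $\frac{L-K}{4L}$. This needs a sharper count in the Frobenius bound. First I would try grouping the entries by the anti-diagonals $j+l=k$ and the diagonals $|j-l|=k$ separately. I would also use the exact structure of the Chebyshev (Toeplitz-plus-Hankel) matrix: $D$ is a compression of a larger matrix, since a $K$-atom symmetric measure on the circle at angles $\pm\theta_i$ has an $L\times L$ Toeplitz trigonometric moment matrix $(\epsilon_{|j-l|})$. With that structure, each $\epsilon_k^2$ should be counted with weight at most $L$ rather than $\frac{3L}{2}$. If the sharp constant resists, the argument still yields \eqref{lowerchi} with a slightly worse absolute constant. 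Only the downstream constant $37850$ in Theorem~\ref{thm:main-support} would change; the order $\sqrt{n\log\log n}$ would not.
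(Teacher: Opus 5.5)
\textbf{Gap: the Frobenius upper bound proves a weaker constant than the theorem states.} Your overall route is the paper's: Parseval with the $\mathrm{Beta}(1/2,1/2)$-based $H_k$, truncation at $2L-2$ using monotonicity of $h_k$, reduction to $\sum_{k=1}^{2L-2}\epsilon_k^2\ge\frac{L-K}{4L}$, and a rank argument on the $L\times L$ Chebyshev moment matrix. Your lower bound $\|D\|_F^2\ge\frac{L-K}{4}$ is correct. It follows from $\mathrm{rank}(M)\le K$ plus Courant--Fischer on the null space of $M$, where $v^\top D v=-v^\top M_r v\le-\frac12\|v\|^2$. This is not really Weyl's inequality, but it is a valid substitute for the paper's Eckart--Young--Mirsky step. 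The problem is the upper bound on $\|D\|_F^2$. With your weight $\frac{3L}{2}$, the argument proves only $\chi^2(P_Y\|P_{Y_r})\ge\frac{L-K}{3L\prod_{j=1}^{2L-2}\frac{n+j}{n-j+1}}$. That is strictly weaker than $B_n(L)$. The fallback ``with a slightly worse absolute constant'' does not prove the theorem as stated.

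\textbf{The fix: count the two index families jointly.} Do not bound each family by its own maximum. After $(a+b)^2\le 2a^2+2b^2$, the coefficient of $\epsilon_k^2$ is $S_k=\frac12\bigl(N_1(k)+N_2(k)\bigr)$. Here $N_1(k)$ counts pairs $(j,l)\in\{0,\dots,L-1\}^2$ with $j+l=k$, and $N_2(k)$ counts pairs with $|j-l|=k$.

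For $1\le k\le L-1$, $N_1(k)=k+1$ and $N_2(k)=2(L-k)$. So $N_1(k)+N_2(k)=2L-k+1\le 2L$.

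For $L\le k\le 2L-2$, $N_1(k)=2L-1-k$ and $N_2(k)=0$.

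Hence $S_k\le L$ for every $k$, with equality at $k=1$. This gives $\|D\|_F^2\le L\sum_{k=1}^{2L-2}\epsilon_k^2$, and so exactly $\sum_k\epsilon_k^2\ge\frac{L-K}{4L}$. Your $\frac{3L}{2}$ implicitly assumes $N_1$ and $N_2$ peak at the same $k$. In fact $N_2$ is largest for small $k$ and $N_1$ is largest for $k$ near $L-1$, so they trade off. No Toeplitz or compression structure is needed; this one-line count closes the proof.
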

\begin{IEEEproof}
The proof proceeds by leveraging the spectral properties of the conditional expectation operators associated with the binomial channel and applying the Eckart-Young-Mirsky theorem to a moment matrix of arbitrary dimension~$L$.

For any integer $m \le n$, we can truncate the sum to lower bound the divergence \eqref{eq:chi_square_div_Parseval}:
\begin{equation} \label{eq:chi2_trunc}
    \chi^2(P_Y \| P_{Y_r}) \ge \sum_{k=1}^{m} \frac{\epsilon_k^2}{h_k}.
\end{equation}

Let $L$ be an integer such that $K < L$ and $2L-2 \le n$. Define the $L \times L$ moment matrix $M$ with entries $M_{i,j} = \mathbb{E}_X[\tilde{T}_i(X) \tilde{T}_j(X)]$ for $0 \le i, j \le L-1$. Since $X$ is supported on $K$ mass points, $M$ can be written as the sum of $K$ rank-1 matrices, implying $\text{rank}(M) \le K$.

Let $D$ be the corresponding matrix for the reference distribution $X_r$, so $D_{i,j} = \mathbb{E}_{X_r}[\tilde{T}_i(X_r) \tilde{T}_j(X_r)]$. The matrix $D$ is diagonal with entries $D_{0,0} = 1$ and $D_{i,i} = 1/2$ for $i \ge 1$. The singular values of $D$ are $1$ (with multiplicity 1) and $1/2$ (with multiplicity $L-1$). 

By the Eckart-Young-Mirsky theorem \cite{golub1988generalization}, the Frobenius distance from $D$ to any matrix of rank at most $K$ is lower-bounded by the sum of the squared $L-K$ smallest singular values of $D$. Since $K \ge 1$, the $L-K$ smallest singular values are all $1/2$. Thus,
\begin{equation} \label{eq:EYM}
    \|M - D\|_F^2 \ge (L-K) \left(\frac{1}{2}\right)^2 = \frac{L-K}{4}.
\end{equation}

Using the trigonometric identity $\cos(i\theta)\cos(j\theta) = \frac{1}{2}(\cos((i+j)\theta) + \cos(|i-j|\theta))$, the product of Chebyshev polynomials linearizes as
\begin{equation}
    \tilde{T}_i(x) \tilde{T}_j(x) = \frac{1}{2} \left( \tilde{T}_{i+j}(x) + \tilde{T}_{|i-j|}(x) \right).
\end{equation}
Thus, the entries of the moment matrix $M$ are given by $M_{i,j} = \frac{1}{2} (\epsilon_{i+j} + \epsilon_{|i-j|})$.
To elegantly express the difference matrix $M - D$, we define a modified sequence $\tilde{\epsilon}_k$:
\begin{equation}
    \tilde{\epsilon}_k = \begin{cases} \epsilon_k & \text{if } k \ge 1, \\ 0 & \text{if } k = 0. \end{cases}
\end{equation}
Since $\epsilon_0 = 1$, it is straightforward to verify that for all $0 \le i, j \le L-1$,
\begin{equation}
    M_{i,j} - D_{i,j} = \frac{1}{2} (\tilde{\epsilon}_{i+j} + \tilde{\epsilon}_{|i-j|}).
\end{equation}
Indeed, for $i \neq j$, $D_{i,j} = 0$ and the indices $i+j, |i-j| \ge 1$. For $i = j \ge 1$, $D_{i,i} = 1/2$ and $\frac{1}{2}(\tilde{\epsilon}_{2i} + \tilde{\epsilon}_0) = \frac{1}{2}\epsilon_{2i} = M_{i,i} - 1/2$. For $i=j=0$, $D_{0,0} = 1$ and $\frac{1}{2}(\tilde{\epsilon}_0 + \tilde{\epsilon}_0) = 0 = M_{0,0} - 1$.

Using the inequality $(a+b)^2 \le 2a^2 + 2b^2$, we bound the squared entries:
\begin{equation}
    (M_{i,j} - D_{i,j})^2 \le \frac{1}{2} (\tilde{\epsilon}_{i+j}^2 + \tilde{\epsilon}_{|i-j|}^2).
\end{equation}
Summing over all $0 \le i, j \le L-1$:
\begin{equation}
    \|M - D\|_F^2 \le \frac{1}{2} \sum_{i,j=0}^{L-1} (\tilde{\epsilon}_{i+j}^2 + \tilde{\epsilon}_{|i-j|}^2) = \sum_{k=1}^{2L-2} S_k \tilde{\epsilon}_k^2,
\end{equation}
where $S_k = \frac{1}{2} (N_1(k) + N_2(k))$, with $N_1(k)$ being the number of pairs $(i,j) \in \{0,\dots,L-1\}^2$ such that $i+j=k$, and $N_2(k)$ the number of pairs such that $|i-j|=k$. The $k=0$ term vanishes because $\tilde{\epsilon}_0 = 0$.
For $1 \le k \le L-1$, $N_1(k) = k+1$ and $N_2(k) = 2(L-k)$, giving $S_k = L - \frac{k-1}{2} \le L$.
For $L \le k \le 2L-2$, $N_1(k) = 2L - 1 - k$ and $N_2(k) = 0$, giving $S_k = L - \frac{k+1}{2} \le L$.
Since $\tilde{\epsilon}_k = \epsilon_k$ for $k \ge 1$, we obtain:
\begin{equation}
    \|M - D\|_F^2 \le L \sum_{k=1}^{2L-2} \epsilon_k^2.
\end{equation}
Combining this with \eqref{eq:EYM} yields:
\begin{equation} \label{eq:epsilon_bound}
    \sum_{k=1}^{2L-2} \epsilon_k^2 \ge \frac{L-K}{4L}.
\end{equation}

Notice from \eqref{eq:hk_exact} that $h_k$ is strictly increasing with $k$ because each factor $\frac{n+j}{n-j+1} > 1$ for $j \ge 1$. Thus, $h_k \le h_{2L-2}$ for all $1 \le k \le 2L-2$. Returning to the $\chi^2$ divergence \eqref{eq:chi2_trunc} with $m = 2L-2$:
\begin{equation}
    \chi^2(P_Y \| P_{Y_r}) \ge \sum_{k=1}^{2L-2} \frac{\epsilon_k^2}{h_k} \ge \frac{1}{h_{2L-2}} \sum_{k=1}^{2L-2} \epsilon_k^2 \ge \frac{L-K}{4L h_{2L-2}}.
\end{equation}
Substituting the expression for $h_{2L-2}$ yields the final lower bound:
\begin{equation}
\label{chisq}
    \chi^2(P_Y \| P_{Y_r}) \ge \frac{L-K}{2L \prod_{j=1}^{2L-2} \frac{n+j}{n-j+1}}.
\end{equation}
This completes the proof of Theorem \ref{thm:main}.
\end{IEEEproof}

The lower bound in Theorem~\ref{thm:main} depends on the free integer parameter $L$ satisfying $K < L$ and $2L-2 \le n$. To obtain the strongest possible bound, we should maximize $B_n(L)$ over all admissible values of $L$. This optimization does not have a closed-form solution. Thus, we establish an explicit lower bound in the next proposition.
\begin{prop}\label{prop:uniform-explicit}
Let $L$ be an integer such that $K < L \le \frac{n+2}{2}$. Then
\begin{equation}
B_{n}(L)
\ge
\frac{L-K}{2L}\exp\!\left(-\frac{(2L-2)^2}{\,n-2L+3\,}\right).
\end{equation}
\end{prop}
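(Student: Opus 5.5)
We need $\prod_{j=1}^{m}\frac{n+j}{n-j+1}\le \exp\bigl(\frac{m^2}{n-m+1}\bigr)$ with $m=2L-2$. Since $L\le\frac{n+2}{2}$ gives $m\le n$, every factor is well defined and positive, and $n-m+1\ge 1$. Plugging the bound into the definition of $B_n(L)$ from Theorem~\ref{thm:main} then gives the proposition, because $B_n(L)=\frac{L-K}{2L}\prod_{j=1}^m\frac{n-j+1}{n+j}$.

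The plan is to take logarithms and bound each factor. Write
\[
\frac{n+j}{n-j+1}=1+\frac{2j-1}{n-j+1}.
\]
Using $\log(1+x)\le x$ for $x\ge 0$, we get
\[
\log\prod_{j=1}^m\frac{n+j}{n-j+1}\le\sum_{j=1}^m\frac{2j-1}{n-j+1}.
\]
For $1\le j\le m$ the denominator satisfies $n-j+1\ge n-m+1$, so the sum is at most
\[
\frac{1}{n-m+1}\sum_{j=1}^m(2j-1)=\frac{m^2}{n-m+1},
\]
using the odd-number identity $\sum_{j=1}^m(2j-1)=m^2$. With $m=2L-2$ the denominator is $n-2L+3$, which is exactly the exponent in the statement.

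Exponentiating and inverting gives $\prod_{j=1}^{m}\frac{n-j+1}{n+j}\ge \exp\bigl(-\frac{(2L-2)^2}{n-2L+3}\bigr)$, and multiplying by the nonnegative factor $\frac{L-K}{2L}$ completes the argument.

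There is no real obstacle here. The only points needing care are checking the admissible range of $L$, so that all denominators are positive, and the trivial case $L=1$, where $m=0$: the product is empty and the bound holds with equality.
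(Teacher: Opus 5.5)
Your proposal is correct and follows essentially the same route as the paper's proof: you rewrite each factor as $1+\frac{2j-1}{n-j+1}$, apply $\log(1+x)\le x$, bound every denominator below by $n-m+1$, and sum the odd numbers to get $m^2$ with $m=2L-2$. Your additional checks (that $m\le n$ keeps all denominators positive, and that the prefactor $\frac{L-K}{2L}$ is nonnegative) are details the paper leaves implicit.
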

\begin{IEEEproof}
Set $m:=2L-2$. Then
\begin{align}
\log\!\left(\prod_{j=1}^{m}\frac{n+j}{n-j+1}\right)
&=
\sum_{j=1}^{m}\log\!\left(1+\frac{2j-1}{n-j+1}\right) \\
&\le
\sum_{j=1}^{m}\frac{2j-1}{n-j+1} \label{eq:use_logineq} \\
&\le
\frac{1}{n-m+1}\sum_{j=1}^{m}(2j-1) \\
&=\frac{m^2}{n-m+1}
\end{align}
where in \eqref{eq:use_logineq} we used $\log(1+x)\le x$ for $x>-1$. Substituting into $B_n(L)$, we have the final result
\[
B_{n}(L)
\ge
\frac{L-K}{2L}\exp\!\left(-\frac{m^2}{n-m+1}\right).
\] 

\end{IEEEproof}

\section{Proofs}
\label{sec:proofs}
This section collects some of the proofs. 

\subsection{Proof of Theorem~\ref{thm:capacity-lb}}
\label{sec:proof_thm:capacity-lb}
\begin{IEEEproof}
Let $X \sim \mathrm{Beta}\!\left(\frac{1}{2}, \frac{1}{2}\right)$, with pdf as in \eqref{eq:pdf_beta}.
Then,
\begin{equation}
C(n) \ge I(X;Y) = H(Y) - H(Y|X).
\end{equation}
We bound each term separately. Write
\begin{align}
    H(Y|X) &\le \frac{1}{2}\log(2\pi e) + \frac{1}{2}\mathbb{E}\big[\log(nX(1-X)+\tfrac{1}{12})\big] \label{eq:upper_HY_given_x} \\
    &= \frac{1}{2}\log(2\pi e)
+
\log\!\left(\frac{1+\sqrt{3n+1}}{4\sqrt{3}}\right),
\end{align}
where \eqref{eq:upper_HY_given_x} follows from \cite[Lemma 10]{zieder2024binomial}, and the last step follows from a closed form evaluation of the expectation by using the substitution $x=\sin^2(t/2)$ and the identity \cite{Ryzhik}
\begin{equation}
\int_0^{\pi/2} \log(a + b \sin^2 t)\,dt
=
\pi \log\!\left(\frac{\sqrt{a} + \sqrt{a+b}}{2}\right),
\quad a,b>0.
\end{equation}

Under this input, $Y$ follows a Beta-binomial distribution:
\begin{align}
P_Y(y) &=
\frac{\Gamma(y+\tfrac{1}{2}) \Gamma(n-y+\tfrac{1}{2})}
{\pi \Gamma(y+1)\Gamma(n-y+1)}    \\
&\le \frac{1}{\pi \sqrt{(y+\tfrac{1}{4})(n-y+\tfrac{1}{4})}},
\end{align}
where the last step follows from Kershaw's inequality $\frac{\Gamma(y+\tfrac{1}{2})}{\Gamma(y+1)}
\le (y+\tfrac{1}{4})^{-1/2}$ \cite{kershaw1983some}. Therefore,
\begin{align}
H(Y)
&\ge \log \pi + \mathbb{E}\left[\log\!\big(Y+\tfrac{1}{4}\big)\right] \\
&\ge \log \pi + \mathbb{E}\big[\psi(Y+\tfrac{1}{2})\big] \label{eq:log_psi} \\
&=\log\!\left(\frac{\pi}{4}\right)+\psi(n+1),  \label{eq:betabin_identity}
\end{align}
where in \eqref{eq:log_psi} we used 
$
\log\!\big(y+\tfrac{1}{4}\big)
\ge \psi\!\big(y+\tfrac{1}{2}\big)$, where $\psi$ is the digamma function; and in \eqref{eq:betabin_identity} we used the Beta-binomial identity
\begin{align}
\mathbb{E}\big[\psi(Y+\alpha)\big]
&=
\psi(n+\alpha+\beta) + \psi(\alpha) - \psi(\alpha+\beta) \\
&=\psi(n+1) - 2\log 2,
\end{align}
with $\alpha=\beta=\tfrac{1}{2}$.

Combining the bounds on the entropy terms,
we obtain claim \eqref{eq:capacity-lb}.

Using $\psi(n+1)\ge \log (n+\frac{1}{3} ) -\tfrac{1}{n+1}$, we obtain \eqref{eq:capacity-lb-asymp} where
\begin{equation}
r_{{\rm LB},n} =
\frac{1}{2}\log\!\left(1+\frac{1}{3n}\right)
-
\log\!\left(1+\frac{1}{\sqrt{3n+1}}\right)-\frac{1}{n+1}
\to 0.
\end{equation}
This completes the proof.
\end{IEEEproof}

\subsection{Proof of Theorem~\ref{thm:capacity-ub}}
\label{sec:proof_thm:capacity-up}
\begin{IEEEproof}
By the dual characterization of capacity,
\begin{equation} \label{eq:dual_capacity}
    C(n) \le \sup_{x \in[0,1]} \KL(P_x^{(n)} \| Q_n)
\end{equation}
for any output distribution $Q_n$. Consider the output distribution
\begin{align}\label{eq:Q_n}
    Q_n(y) &=  \binom{n}{y} \left( 2\eta_n \left(\frac{c_n}{n}\right)^y  \left(  1-\frac{c_n}{n}\right)^{n-y} \right.\nonumber \\
    &\quad \left. +(1-2\eta_n) \int_0^1 x^y(1-x)^{n-y} f_{X_r}(x) \mathrm{d}x \right),
\end{align}
where $f_{X_r}$ is as in \eqref{eq:pdf_beta} and
\begin{equation}
    \eta_n \ge \left(\frac{2 \mathrm{e}}{n\pi} \right)^{\frac{1}{4}}, \quad 
    c_n \le \frac{1}{2}\log\frac{n\pi}{2 \mathrm{e}}.
\end{equation}

Xie and Barron \cite{xie1997minimax} have proven the following result.
\begin{theorem}\label{thm:Barron} For the output distribution $Q_n$ given in \eqref{eq:Q_n} and for all sufficiently large $n$ (e.g., pick $n\ge 28$) we have
\begin{equation} \label{eq:XieBarron_upper}
    \max_{x \in [0,1]} \KL(P_x^{(n)} \| Q_n) \le \frac{1}{2}\log\frac{n}{2\pi \mathrm{e}} +\log\frac{\pi}{1-2\eta_n} +\frac{5}{c_n}.
\end{equation}
\end{theorem}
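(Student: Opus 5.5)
The plan is to bound $\KL(P_x^{(n)}\|Q_n)$ separately on a boundary region and an interior region. In each region I discard all but one component of the mixture \eqref{eq:Q_n}, using the elementary fact that if $Q_n\ge \lambda R$ pointwise for some $\lambda\in(0,1]$ and some distribution $R$, then $\KL(P\|Q_n)\le \log(1/\lambda)+\KL(P\|R)$. I read the point-mass component as symmetrized, with weight $\eta_n$ at each of $c_n/n$ and $1-c_n/n$. This matches the total weight $2\eta_n$ and the symmetry of the problem, so by $x\mapsto 1-x$ it suffices to treat $x\le 1/2$. The regions are $x\in[0,c_n/n]$ (boundary) and $x\in[c_n/n,1/2]$ (interior).

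\emph{Interior region.} Use $Q_n\ge(1-2\eta_n)P_{Y_r}$, where $P_{Y_r}$ is the Beta-binomial output \eqref{eq:output_dist_X_r}. This gives $\KL(P_x^{(n)}\|Q_n)\le \log\frac1{1-2\eta_n}+\KL(P_x^{(n)}\|P_{Y_r})$. Since $P_{Y_r}(y)=\binom ny B(y+\frac12,n-y+\frac12)/\pi$, the binomial coefficients cancel, and
\[
\KL(P_x^{(n)}\|P_{Y_r})=\log\pi+\mathbb{E}_x\!\left[Y\log x+(n-Y)\log(1-x)-\log B\!\left(Y+\tfrac12,n-Y+\tfrac12\right)\right].
\]
I would apply two-sided Stirling bounds with explicit remainders to the Beta function, i.e.\ $\log B(a,b)=\frac12\log(2\pi)+(a-\frac12)\log a+(b-\frac12)\log b-(a+b-\frac12)\log(a+b)+O(1/a+1/b)$. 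After this, the integrand becomes
\[
\tfrac12\log\tfrac{n}{2\pi}+n\,\mathrm{d}(\hat Y\|x)+\tfrac12\log\bigl(\hat Y(1-\hat Y)\bigr)-\tfrac12\log\bigl(x(1-x)\bigr)+\text{remainder},
\qquad \hat Y=Y/n,
\]
where $\mathrm{d}$ is the binary relative entropy. Taking expectations, $n\,\mathbb{E}[\mathrm{d}(\hat Y\|x)]=\frac12+O(1/(nx))$ by a second-order expansion, and $\mathbb{E}\log\frac{\hat Y(1-\hat Y)}{x(1-x)}=O(1/(nx))$ as well. Together these give $\frac12\log\frac{n}{2\pi e}+\log\pi$ plus an error of order $1/(nx)\le 1/c_n$.

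\emph{Boundary region.} Use $Q_n\ge\eta_n\,\mathrm{Bin}(n,c_n/n)$, which gives $\KL\le\log\frac1{\eta_n}+n\,\mathrm{d}(x\|c_n/n)$. For $x\le p:=c_n/n$ the function $x\mapsto\mathrm{d}(x\|p)$ is decreasing, so $n\,\mathrm{d}(x\|p)\le -n\log(1-p)\approx c_n$. The first term is at most $\frac14\log\frac{n\pi}{2e}$ by the assumption on $\eta_n$. These must be compared with the target, which simplifies to $\frac12\log\frac{n\pi}{2e}+\log\frac1{1-2\eta_n}+\frac5{c_n}$. This comparison will need care in how $c_n$ is chosen, since $c_n$ up to $\frac12\log\frac{n\pi}{2e}$ would overshoot. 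If the naive bound does not close, I would instead keep the contribution of the Jeffreys component near $y=0$ and combine both components via a log-sum bound. Taking $c_n$ of order $\frac14\log n$ (the upper bound on $c_n$ in the hypothesis allows this) should make the boundary term fit within the interior bound.

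\textbf{Main obstacle.} The hard step is the interior estimate made fully uniform with the explicit constant $5$. The $O(1/(nx))$ remainders, from the Stirling errors at small $Y$ (including $Y=0$, where $\hat Y=0$ makes the logarithm singular, so the Beta-function form must be used directly) and from the second-order expansions, have to be bounded by $5/c_n$ uniformly in $x\in[c_n/n,1/2]$. I would handle this by splitting on the event $|Y-nx|\le\frac12 nx$, using Taylor bounds with explicit third-order remainders on that event and Chernoff bounds off it. The condition $n\ge28$ would be verified numerically at the end.
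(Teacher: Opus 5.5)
The paper gives no argument of its own here; it only cites Xie--Barron, Sec.~III.B. Your two-region scheme, keeping one mixture component per region, is essentially that route. The genuine problem is the boundary step, and it is not a matter of technique: under the hypotheses as written, the inequality is false for large $n$.

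Take the extreme admissible values $\eta_n=(2e/(n\pi))^{1/4}$ and $c_n=\frac12\log\frac{n\pi}{2e}$. These are exactly the values the paper plugs in to obtain $r_{\rm UB}(n)$. We have $\KL(P_0^{(n)}\|Q_n)=-\log Q_n(0)$ and $P_{Y_r}(0)\le 1/\sqrt{\pi n}$. So in either the literal reading or your symmetrized reading of $Q_n$,
\[
Q_n(0)\le 2\eta_n e^{-c_n}+\frac{1-2\eta_n}{\sqrt{\pi n}},\qquad 2\eta_n e^{-c_n}\sqrt{\pi n}=2(2e)^{3/4}(\pi n)^{-1/4}\to 0 .
\]
Hence $\KL(P_0^{(n)}\|Q_n)\ge\frac12\log(\pi n)+\log\frac{1}{1-2\eta_n}-o(1)$. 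This exceeds the claimed right-hand side by $\frac12\log(2e)-10/\log\frac{n\pi}{2e}-o(1)$, which tends to $\frac12\log(2e)\approx 0.85$. At $n=10^{12}$, for example, the left side is about $14.39$ and the right side about $13.91$.

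So your log-sum fallback cannot succeed, and ``taking $c_n$ of order $\frac14\log n$'' does not prove the statement; it changes its hypothesis. What your argument does prove is a corrected statement under a coupling condition such as $\log\frac{1}{\eta_n}+c_n\le\frac12\log\frac{n\pi}{2e}$. Under that condition the boundary bound closes, because $-n\log(1-c_n/n)\le c_n+c_n^2/n$ and $c_n^2/n\le 5/c_n$. With $\eta_n$ at its minimum, this forces $c_n\le\frac14\log\frac{n\pi}{2e}$. Then $5/c_n$ becomes $20/\log\frac{n\pi}{2e}$, not the $10/\log\frac{n\pi}{2e}$ used in $r_{\rm UB}$. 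You should state this restriction explicitly as a correction.

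Two further points. First, your symmetrization of the point mass is also a needed fix, not just a reading. The literal $Q_n$ puts all the weight $2\eta_n$ at $c_n/n$, so $\KL(P_1^{(n)}\|Q_n)=-\log Q_n(n)\ge\frac12\log(\pi n)+\log\frac{1}{1-2\eta_n}-o(1)$. Since $10/\log(2e)\approx 5.9$, this violates the bound for large $n$ whenever $c_n\ge 6$.

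Second, your interior integrand is wrong. Because of the half-integer shifts, Stirling gives $-\log B(Y+\frac12,n-Y+\frac12)=n\,h(\hat Y)+\frac12\log\frac{n}{2\pi}+R_n(Y)$, where $h$ is the binary entropy. The integrand is therefore $\frac12\log\frac{n}{2\pi}-n\,\mathrm{d}(\hat Y\|x)+R_n(Y)$. The sign is minus, and the terms $\frac12\log\frac{\hat Y(1-\hat Y)}{x(1-x)}$ do not appear. With $+n\,\mathrm{d}$ you would get $\frac12\log\frac{ne}{2\pi}$, not $\frac12\log\frac{n}{2\pi e}$.

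After these corrections, the remaining work is as you describe. You must bound $\mathbb{E}[R_n(Y)]$ together with the second-order error in $n\,\mathbb{E}[\mathrm{d}(\hat Y\|x)]-\frac12$ by $5/c_n$, uniformly over $nx\ge c_n$; note that $R_n(0)\approx\frac12\log 2$.
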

\begin{IEEEproof}
    See \cite[Sec.~III.B]{xie1997minimax}.
\end{IEEEproof}

Note that for $n\ge 28$, we have $\eta_n<1/2$, so the quantity $\log(\pi/(1-2\eta_n))$ in the right-hand side of \eqref{eq:XieBarron_upper} is well defined.
Using Theorem~\ref{thm:Barron} with \eqref{eq:dual_capacity} concludes the proof.
\end{IEEEproof}

\subsection{Proof of Theorem ~\ref{thm:optimal-output}}
\label{proofoftheorem3}
\begin{IEEEproof}
Let $q_y := P_{Y_r}(y)$, and $q_y^* := P_{Y^*}(y)$. Define
\begin{equation}
h_y := \log \frac{q_y}{q_y^*}, \qquad y=0,\dots,n.    
\end{equation}
We will show that $\sup_{n\ge 1}\max_{0\le y\le n} h_y < \infty$.

Since the capacity-achieving input distribution is symmetric around $\tfrac12$ \cite{Zieder2024}, its output
distribution is symmetric:
\begin{equation}
q_y = q_{n-y}, \qquad y=0,\dots,n.    
\end{equation}
Also, the Beta-binomial output induced by $\mathrm{Beta}(\tfrac12,\tfrac12)$ is symmetric:
\begin{equation}
q_y^* = q_{n-y}^*, \qquad y=0,\dots,n.    
\end{equation}
Hence
\begin{equation}
h_y = h_{n-y}, \qquad y=0,\dots,n.    
\end{equation}

\medskip

For a generic input distribution $P_X$, the induced output satisfies
\begin{equation}
q_y = \binom{n}{y}\int_0^1 x^y(1-x)^{n-y}\,dP_X(x).    
\end{equation}
A direct posterior-ratio computation gives~\cite{Zieder2024}
\begin{equation}
\frac{q_{y+1}}{q_y}
=
\frac{n-y}{y+1}
\,
\mathbb{E}\!\left[
\frac{X}{1-X}\,\middle|\,Y=y
\right],    
\end{equation}
and
\begin{equation}
\frac{q_{y-1}}{q_y}
=
\frac{y}{n-y+1}
\,
\mathbb{E}\!\left[
\frac{1-X}{X}\,\middle|\,Y=y
\right].    
\end{equation}
Multiplying and using
\begin{equation}
\mathbb{E}[U]\mathbb{E}[U^{-1}] \ge 1 \qquad \text{for every } U>0,    
\end{equation}
we obtain
\begin{equation}
\frac{q_{y+1}q_{y-1}}{q_y^2}
\ge
\frac{y(n-y)}{(y+1)(n-y+1)}.    
\end{equation}
Therefore
\begin{align}
\Delta^2 \log q_y
&:=
\log q_{y+1} - 2\log q_y + \log q_{y-1}\\
&\ge
\log \frac{y(n-y)}{(y+1)(n-y+1)},
\label{eq:dd-log-q}
\end{align}
for $y=1,\dots,n-1$.

For the reference output $q_y^*$, a direct Gamma-function calculation gives
\begin{equation}
\frac{q_{y+1}^*}{q_y^*}
=
\frac{y+\tfrac12}{y+1}
\cdot
\frac{n-y}{n-y-\tfrac12},    
\end{equation}
and
\begin{equation}
\frac{q_{y-1}^*}{q_y^*}
=
\frac{y}{y-\tfrac12}
\cdot
\frac{n-y+\tfrac12}{n-y+1}.    
\end{equation}
Hence
\begin{equation}
\Delta^2 \log q_y^*
=
\log
\frac{(y+\tfrac12)y(n-y)(n-y+\tfrac12)}
{(y+1)(y-\tfrac12)(n-y-\tfrac12)(n-y+1)}.
\label{eq:dd-log-qstar}
\end{equation}

Subtracting \eqref{eq:dd-log-qstar} from \eqref{eq:dd-log-q} yields, for $y=1,\dots,n-1$,
\begin{equation}
\Delta^2 h_y
=
\Delta^2 \log q_y - \Delta^2 \log q_y^*
\le
\log
\frac{(y-\tfrac12)(n-y-\tfrac12)}
{(y+\tfrac12)(n-y+\tfrac12)}
< 0.    
\end{equation}
Thus $\{h_y\}_{y=0}^n$ is strictly discrete concave.

For $n\ge 2$, define
\begin{equation}
I_n
:=
\left\{
\left\lceil \frac{n}{3}\right\rceil,
\left\lceil \frac{n}{3}\right\rceil +1,
\dots,
\left\lfloor \frac{2n}{3}\right\rfloor
\right\}.    
\end{equation}
We claim that
\begin{equation}
q^*(I_n):=\sum_{y\in I_n} q_y^* \ge \frac{1}{6\pi}.
\label{eq:qstar-central-mass}
\end{equation}

Indeed, since
\begin{equation}
\frac{q_{y+1}^*}{q_y^*}
=
\frac{y+\tfrac12}{y+1}
\cdot
\frac{n-y}{n-y-\tfrac12},    
\end{equation}
we have
\begin{equation}
\frac{q_{y+1}^*}{q_y^*} \le 1
\quad\Longleftrightarrow\quad
y \le \frac{n-1}{2}.    
\end{equation}
Thus the sequence $\{q_y^*\}$ decreases from the edge toward the center; in particular,
it attains its minimum at the center. Hence, for every $y\in I_n$,
\begin{equation}
q_y^* \ge q_{\lfloor n/2\rfloor}^*.    
\end{equation}
Using the elementary Gamma-ratio bound
\begin{equation}
\frac{\Gamma(m+\tfrac12)}{\Gamma(m+1)} \ge \frac{1}{\sqrt{m+1}},
\qquad m\ge 0,    
\end{equation}
we obtain
\begin{equation}
q_{\lfloor n/2\rfloor}^*
\ge
\frac{1}{\pi\sqrt{\bigl(\lfloor n/2\rfloor+1\bigr)\bigl(\lceil n/2\rceil+1\bigr)}}
\ge
\frac{2}{\pi(n+2)},    
\end{equation}
where the last step follows from $\sqrt{ab}\le (a+b)/2$ with
\begin{equation}
a=\lfloor n/2\rfloor+1,
\qquad
b=\lceil n/2\rceil+1,
\qquad
a+b=n+2.    
\end{equation}
Also,
\begin{equation}
|I_n|
=
\left\lfloor \frac{2n}{3}\right\rfloor
-
\left\lceil \frac{n}{3}\right\rceil
+1
\ge \frac{n-1}{3}.    
\end{equation}
Therefore
\begin{equation}
q^*(I_n)
\ge
|I_n|\cdot q_{\lfloor n/2\rfloor}^*
\ge
\frac{n-1}{3}\cdot \frac{2}{\pi(n+2)}
=
\frac{2(n-1)}{3\pi(n+2)}
\ge
\frac{1}{6\pi},    
\end{equation}
which proves \eqref{eq:qstar-central-mass}.

Because $0$ and $1$ belong to the support of the capacity-achieving input
distribution (see \cite{Zieder2024}), the KKT equality holds at $x=0$ and $x=1$, and in
particular
\begin{equation}
C(n)=\log\frac{1}{q_0}=\log\frac{1}{q_n}.    
\end{equation}
Equivalently,
\begin{equation}
q_0=e^{-C(n)}.    
\end{equation}
Hence
\begin{equation}
h_0 = \log\frac{q_0}{q_0^*} = -C(n)-\log q_0^*.    
\end{equation}

Now
\begin{equation}
q_0^*
=
\frac{\Gamma(n+\tfrac12)}{\sqrt{\pi}\,\Gamma(n+1)}
\le
\frac{1}{\sqrt{\pi n}},    
\end{equation}
so
\begin{equation}
-\log q_0^* \ge \frac12 \log(\pi n).    
\end{equation}
Also, using any explicit upper bound of the form
\begin{equation}
C(n)\le \frac12 \log n + A_0    
\end{equation}
valid for all $n\ge 2$ (for instance, a crude simplification of the capacity upper bound
in \cite{Zieder2024}), we may take
\begin{equation}
A_0 := \log(2\pi) + 2 + \frac12\log 3.    
\end{equation}
Therefore
\begin{equation}
h_0
\ge
\frac12\log(\pi n) - \left(\frac12\log n + A_0\right)
=
\frac12\log\pi - A_0
=
-B_0,    
\end{equation}
where
\begin{equation}
B_0:=2+\log 2+\frac12\log(3\pi).    
\end{equation}
By symmetry, the same bound holds at the other endpoint:
\begin{equation}
h_n=h_0\ge -B_0.    
\end{equation}
Let
\begin{equation}
a_n:=\max_{0\le y\le n} h_y.    
\end{equation}
Since $\{h_y\}$ is symmetric and discrete concave, its maximum is attained at the center:
\begin{equation}
a_n = h_{\lfloor n/2\rfloor}=h_{\lceil n/2\rceil}.    
\end{equation}
We claim that for every $y\in I_n$,
\begin{equation}
h_y \ge \frac13 h_0 + \frac23 a_n.
\label{eq:central-chord}
\end{equation}

To prove this, first assume $y\le \lfloor n/2\rfloor$, and set
\begin{equation}
m:=\lfloor n/2\rfloor.    
\end{equation}
By discrete concavity, the sequence lies above the chord joining $(0,h_0)$ and $(m,a_n)$,
so
\begin{equation}
h_y
\ge
\left(1-\frac{y}{m}\right)h_0 + \frac{y}{m}a_n.    
\end{equation}
Since $y\in I_n$, we have $y\ge \lceil n/3\rceil$, hence
\begin{equation}
\frac{y}{m}\ge \frac{n/3}{n/2}=\frac23.    
\end{equation}
Therefore, since $h_0 \le a_n$ by definition,
\begin{equation}
h_y \ge \frac13 h_0 + \frac23 a_n.    
\end{equation}
If instead $y\ge \lceil n/2\rceil$, apply the same argument to $n-y$ and use the symmetry
$h_y=h_{n-y}$. This proves \eqref{eq:central-chord}.

Since
\begin{equation}
q_y = q_y^* e^{h_y},    
\end{equation}
we have
\begin{equation}
1=\sum_{y=0}^n q_y
=
\sum_{y=0}^n q_y^* e^{h_y}
\ge
\sum_{y\in I_n} q_y^* e^{h_y}.    
\end{equation}
Using \eqref{eq:central-chord},
\begin{equation}
1
\ge
q^*(I_n)\,
\exp\!\left(\frac13 h_0 + \frac23 a_n\right).    
\end{equation}
By \eqref{eq:qstar-central-mass} and the bound $h_0\ge -B_0$, we obtain
\begin{equation}
1
\ge
\frac{1}{6\pi}
\exp\!\left(-\frac13 B_0 + \frac23 a_n\right).    
\end{equation}
Taking logarithms gives
\begin{equation}
\frac23 a_n - \frac13 B_0 \le \log(6\pi),    
\end{equation}
that is,
\begin{equation}
a_n \le \frac32 \log(6\pi) + \frac12 B_0.    
\end{equation}

Thus, for every $y=0,\dots,n$,
\begin{equation}
h_y \le a_n \le \frac32 \log(6\pi) + \frac12 B_0.    
\end{equation}
Exponentiating,
\begin{equation}
\frac{q_y}{q_y^*}
\le
\exp\!\left(\frac32 \log(6\pi) + \frac12 B_0\right),    
\end{equation}
or equivalently,
\begin{equation}
\frac{q_y^*}{q_y}
\ge
\exp\!\left(-\frac32 \log(6\pi) - \frac12 B_0\right)
=: c_\star.    
\end{equation}

Finally, for $n=1$, both the capacity-achieving output and the
$\mathrm{Beta}(\tfrac12,\tfrac12)$-induced output are equal to $(1/2,1/2)$, so
\begin{equation}
\frac{q_y^*}{q_y}=1
\qquad \text{for } y\in\{0,1\}.    
\end{equation}
Therefore the same constant $c_\star$ works for all $n\ge 1$.

To show \eqref{eq:upper_gap}, note that by using the \emph{golden formula} \cite{topsoe1967information,csiszar2011information}, we have
\begin{align}
    \kl{P_{Y_r}}{P_{Y^{\ast}}} &\le C(n)-I(X_r;Y_r) \\
    &\le {\rm Gap}(n) \label{eq:apply_cap_gap} \\
    &\le \frac{17}{\log\!\left(\frac{n\pi}{2e}\right)}, \label{eq:upper_gap_last_Step}
\end{align}
where~\eqref{eq:upper_gap_last_Step} follows from Proposition~\ref{prop:capacity_gap}.
    
\end{IEEEproof}

\subsection{Proof of Theorem~\ref{thm:main-support}} \label{sec:proof_lowerbound_supp}

Before showing the proof of Theorem~\ref{thm:main-support}, we state the following helper proposition. 

\begin{prop}\label{prop:invert-chi2}
Let $X$ be a discrete random variable with $K$ mass points in $[0,1]$. Let $Y$ be the output of the binomial channel induced by $X$. Let $Y_r$ be the output induced by the reference input $X_r \sim \text{Beta}(1/2, 1/2)$. Assume that
\begin{equation} \label{eq:assumption_chi2}
\chi^2(P_Y\|P_{Y_r}) \le u_n
\end{equation}
for some $u_n\in(0,1/4)$, and define
\begin{equation}
\alpha_n := \log\frac{1}{4u_n}.
\end{equation}
Then at least one of the following holds:
\begin{equation}
K > \frac{n+2}{4},
\end{equation}
or
\begin{equation}\label{eq:invert-chi2}
K \ge \frac{4-\alpha_n+\sqrt{\alpha_n(4n+\alpha_n+4)}}{8}.
\end{equation}
In particular, if $\alpha_n\to\infty$, then $K/\sqrt n\to\infty$. More precisely, if moreover $\alpha_n=o(n)$, then
\begin{equation}
K\ge \frac14\sqrt{n\alpha_n}\,(1+o(1))
=
\frac14\sqrt{\,n\log\frac1{4u_n}\,}\,(1+o(1)).
\end{equation}
\end{prop}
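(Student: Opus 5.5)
The plan is to combine Theorem~\ref{thm:main} with the explicit bound of Proposition~\ref{prop:uniform-explicit}, using the specific choice $L = 2K$, and then to invert the resulting inequality as a quadratic in $K$.

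\textbf{Case split.} Suppose the first alternative fails, i.e.\ $K \le \frac{n+2}{4}$. Then $L := 2K$ is an integer with $K < L \le \frac{n+2}{2}$, so it is admissible in Theorem~\ref{thm:main}. Moreover $\frac{L-K}{2L} = \frac14$. Proposition~\ref{prop:uniform-explicit} together with assumption~\eqref{eq:assumption_chi2} then gives
\begin{equation*}
u_n \ge \chi^2(P_Y\|P_{Y_r}) \ge \frac14 \exp\!\left(-\frac{(4K-2)^2}{n-4K+3}\right).
\end{equation*}
Note that $n-4K+3 \ge 1 > 0$ because $4K \le n+2$.

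\textbf{Inverting the bound.} Taking logarithms yields $(4K-2)^2 \ge \alpha_n (n-4K+3)$, where $\alpha_n>0$ since $u_n<1/4$. Expanding, this is the quadratic inequality
\begin{equation*}
16K^2 + (4\alpha_n-16)K + 4 - \alpha_n(n+3) \ge 0 .
\end{equation*}
The quadratic has leading coefficient $16>0$, and its value at $K=0$ is $4-\alpha_n(n+3)$. Its discriminant simplifies to
\begin{equation*}
(4\alpha_n-16)^2 - 64\bigl(4-\alpha_n(n+3)\bigr) = 16\,\alpha_n(\alpha_n+4n+4).
\end{equation*}
Since $K \ge 0$, the inequality forces $K$ to be at least the larger root:
\begin{equation*}
K \ge \frac{16-4\alpha_n + 4\sqrt{\alpha_n(4n+\alpha_n+4)}}{32} = \frac{4-\alpha_n+\sqrt{\alpha_n(4n+\alpha_n+4)}}{8},
\end{equation*}
which is exactly \eqref{eq:invert-chi2}. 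A small point to check here is the case where the smaller root is nonnegative. The inequality could then also hold for $K$ below the smaller root. This is harmless: either the stated bound still follows, or one notes that the right-hand side of \eqref{eq:invert-chi2} is at most $K$ trivially whenever it is at most $1$, since $K\ge1$.

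\textbf{Asymptotics.} Both alternatives give $K/\sqrt n\to\infty$ when $\alpha_n\to\infty$. The first gives $K\gtrsim n/4$. For the second, $\sqrt{\alpha_n(4n+\alpha_n+4)} \ge 2\sqrt{n\alpha_n}$, while the term $-\alpha_n$ is negligible compared with $\sqrt{n\alpha_n}$ once $\alpha_n = o(n)$. Hence $K \ge \frac{2\sqrt{n\alpha_n}}{8}(1+o(1)) = \frac14\sqrt{n\alpha_n}(1+o(1))$. In the first alternative, $\frac{n+2}{4} \gg \frac14\sqrt{n\alpha_n}$ when $\alpha_n=o(n)$, so the same conclusion holds there as well.

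\textbf{Main obstacle.} The proof is essentially algebraic. The only delicate points are the admissibility of $L=2K$, which is handled by the case split, and the sign bookkeeping in selecting the correct root of the quadratic.
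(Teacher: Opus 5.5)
Your proposal is correct and follows the paper's proof essentially verbatim: the same case split on $K\le (n+2)/4$, the choice $L=2K$ in Proposition~\ref{prop:uniform-explicit}, the same quadratic inequality in $K$, and the same larger root and asymptotics. Your remark on root selection is more careful than the paper, which skips it, and it closes cleanly by Vieta: the roots sum to $1-\alpha_n/4<1$, so if the smaller root is nonnegative then the larger root is below $1\le K$.
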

\begin{IEEEproof}   
    If $K>(n+2)/4$, there is nothing to prove. Assume instead that
\begin{equation}
K\le \frac{n+2}{4}.
\end{equation}
Then $L=2K$ is admissible in Theorem~\ref{thm:main}, because
\begin{equation}
2L-2=4K-2\le n.
\end{equation}
Applying Proposition~\ref{prop:uniform-explicit} with $L=2K$ gives
\begin{equation}
\chi^2(P_Y\|P_{Y_r}) \ge \frac14\exp\!\left(-\frac{(4K-2)^2}{n-4K+3}\right).
\end{equation}
Combining this with the upper bound $\chi^2(P_Y\|P_{Y_r})\le u_n$, we obtain
\begin{equation}
u_n \ge \frac14\exp\!\left(-\frac{(4K-2)^2}{n-4K+3}\right),
\end{equation}
and therefore
\begin{equation}
\frac{(4K-2)^2}{n-4K+3}\ge \alpha_n,
\qquad
\alpha_n:=\log\frac{1}{4u_n}.
\end{equation}
Rearranging yields
\begin{equation}
16K^2+(4\alpha_n-16)K+(4-3\alpha_n-\alpha_n n)\ge 0.
\end{equation}
Solving this quadratic inequality for $K$ gives \eqref{eq:invert-chi2}.

Finally, if $\alpha_n\to\infty$, then either $K>(n+2)/4$, in which case
$K/\sqrt n\to\infty$, or else the displayed lower bound applies. If in addition
$\alpha_n=o(n)$, then
\begin{equation}
\sqrt{\alpha_n(4n+\alpha_n+4)}=2\sqrt{n\alpha_n}\,(1+o(1)),
\end{equation}
and since $\alpha_n=o(\sqrt{n\alpha_n})$, we obtain
\begin{equation}
K\ge \frac14\sqrt{n\alpha_n}\,(1+o(1)).
\end{equation}
This concludes the proof.  
\end{IEEEproof}

We are now ready to prove Theorem~\ref{thm:main-support}. 
\begin{IEEEproof}
The first two bounds in \eqref{eq:explicit-support} were shown in \cite{zieder2024binomial}. It remains to prove the third bound. Starting with \eqref{eq:upper_chi2}, we have that 
\begin{equation}
  \chi^2(P_{Y^\ast} \| P_{Y_r}) \le \zeta(c_\star^{-1}) \frac{17}{\log\!\left(\frac{n\pi}{2e}\right)}=: u_n,
\end{equation}
and, in addition, let 
\begin{equation}\label{eq:def_U_alphan}
 \alpha_n := \log \frac{1}{4u_n}.
\end{equation}
For $n > \frac{2}{\pi} e^{37851}$, we have $u_n < 1/4$ and $\alpha_n>0$. Therefore, by applying Proposition~\ref{prop:invert-chi2}, and by assuming $K \le (n+2)/4$, we have
\begin{align}
    K
&\ge
\frac{
4 - \alpha_n + \sqrt{\alpha_n(4n + \alpha_n + 4)}
}{8} \\
&\ge \frac{
2\sqrt{n\alpha_n} - \alpha_n
}{8} \label{eq:use_alphan_lessthan_n_1} \\
&\ge \frac{1}{8}\sqrt{n\alpha_n} \label{eq:use_alphan_lessthan_n_2} \\
&\ge\frac{1}{8}
\sqrt{
n \,
\log\!\left(
\frac{\log\!\left(\frac{n\pi}{2e}\right)}{68\,\zeta(c_\star^{-1}) }
\right)} \label{eq:use_def_alphan} \\
&\ge\frac{1}{8}
\sqrt{
n \,
\log\!\left(
\frac{\log\!\left(\frac{n\pi}{2e}\right)}{37850 }
\right) 
}, \label{eq:lower_K}
\end{align}
where \eqref{eq:use_alphan_lessthan_n_1} and \eqref{eq:use_alphan_lessthan_n_2} follow from $\alpha_n \le n$ for $n> \frac{2}{\pi} e^{37851}$, which also implies $\alpha_n \le \sqrt{n\alpha_n}$; and \eqref{eq:use_def_alphan} follows from \eqref{eq:def_U_alphan}. 

Thus, for all $n >\frac{2}{\pi} e^{37851}$, either
$K > \frac{n+2}{4}$ or $K$ is lower-bounded by \eqref{eq:lower_K}. Since \eqref{eq:lower_K} is smaller than $(n+2)/4$ in this regime, the desired bound follows.  This concludes the proof.
\end{IEEEproof}

\section{Conclusion}\label{sec:conclusion}
We have shown that the support size of the capacity-achieving input distribution of the binomial channel must grow faster than the previously known order $\sqrt n$. More precisely, we proved a lower bound of order $\sqrt{n\log\log n}$, up to explicit constants. The proof combines a quantitative lower bound on the $\chi^2$ divergence from a reference Beta-binomial output with new upper and lower bounds on the capacity with vanishing gap.

Several questions remain open. The exact asymptotic growth of the support size is still unknown, and there remains a substantial gap between the lower bound proved here and the best known upper bound of order $n/2$. It would also be of interest to refine the explicit constants, obtain sharper non-asymptotic estimates on the output distribution ratio $P_{Y^\ast}/P_{Y_r}$, and investigate whether the approximation viewpoint developed here can be transferred to related discrete-output channels.

\bibliographystyle{IEEEtran}
\bibliography{refs.bib}

\end{document}